\newtheorem{lemma}{Lemma}
\newtheorem{theorem}{Theorem}
\newcommand{\qed}{\hfill\ensuremath{\Box}\medskip\\\noindent}
\newenvironment{proof}{\noindent\emph{Proof. }}{\qed}
\newcommand{\leaf}{\ensuremath{\mathrm{leaf}}}
\newcommand{\rank}{\ensuremath{\mathrm{rank}}}
\newcommand{\lab}{\ensuremath{\mathrm{label}}}
\newcommand{\nca}{\ensuremath{\mathrm{nca}}}
\newcommand{\lnca}{\ensuremath{\mathrm{lnca}}}
\newcommand{\lcp}{\ensuremath{\mathrm{lcp}}}
\newcommand{\strdepth}{\ensuremath{\mathrm{strdepth}}}
\newcommand{\LCP}{\ensuremath{\mathrm{LCP}}}
\newcommand{\LCPP}{\ensuremath{\mathrm{LCPP}}}
\newcommand{\LDP}{\ensuremath{\mathrm{LDP}}}
\newcommand{\bin}{\ensuremath{\mathrm{bin}}}
\newcommand{\rmb}{\ensuremath{\mathrm{rmb}}}
\newcommand{\lmb}{\ensuremath{\mathrm{lmb}}}
\newcommand{\lsmear}{\ensuremath{\mathrm{lsmear}}}
\newcommand{\rsmear}{\ensuremath{\mathrm{rsmear}}}
\newcommand{\lcpp}{\ensuremath{\mathrm{lcpp}}}
\newcommand{\ldp}{\ensuremath{\mathrm{ldp}}}
\newcommand{\dom}{\ensuremath{\mathrm{dom}}}
\newcommand{\lex}{\ensuremath{\mathrm{lex}}}
\newcommand{\Zip}{\ensuremath{\textsc{Zip}}}
\newcommand{\Unzip}{\ensuremath{\textsc{Unzip}}}
\newcommand{\Lnca}{\ensuremath{\textsc{Lnca}}}
\newcommand{\Merge}{\ensuremath{\textsc{Merge}}}
\newcommand{\Sort}{\ensuremath{\textsc{Sort}}}
\newcommand{\SortedMap}{\ensuremath{\textsc{$\Map^{\triangle}$}}}
\newcommand{\Map}{\ensuremath{\textsc{Map}}}
\newcommand{\ceil}[1]{\left\lceil{#1}\right\rceil}
\newcommand{\floor}[1]{\left\lfloor{#1}\right\rfloor}
\renewcommand{\angle}[1]{\langle{#1}\rangle}
\title{Faster Approximate String Matching for Short Patterns}
\author{Philip Bille\thanks{Technical University of Denmark, DK-2800 Kgs. Lyngby, Denmark, Email: {\tt phbi@imm.dtu.dk}. Supported by the Danish Agency for Science, Technology, and Innovation}}
\date{\today}
\begin{document}
\maketitle

\begin{abstract}
  We study the classical approximate string matching problem, that is,
  given strings $P$ and $Q$ and an error threshold $k$, find all
  ending positions of substrings of $Q$ whose edit distance to $P$ is
  at most $k$. Let $P$ and $Q$ have lengths $m$ and $n$,
  respectively. On a standard unit-cost word RAM with word size $w
  \geq \log n$ we present an algorithm using time
$$
O\left(nk \cdot \min\left(\frac{\log^2 m}{\log n},\frac{\log^2 m\log w}{w}\right) + n\right)
$$
When $P$ is short, namely, $m = 2^{o(\sqrt{\log n})}$ or $m =
2^{o(\sqrt{w/\log w})}$ this improves the previously best known time
bounds for the problem. The result is achieved using a novel
implementation of the Landau-Vishkin algorithm based on tabulation and
word-level parallelism.
\end{abstract}

%\newpage
\section{Introduction}
Given strings $P$ and $Q$ and an \emph{error threshold} $k$, the
\emph{approximate string matching problem} is to report all ending
positions of substrings of $Q$ whose \emph{edit distance} to $P$ is at
most $k$. The edit distance between two strings is the minimum number
of insertions, deletions, and substitutions needed to convert one
string to the other. Approximate string matching is a classical and
well-studied problem in combinatorial pattern matching with a wide
range of applications in areas such as bioinfomatics, network
traffic analysis, and information retrieval.

Let $m$ and $n$ be the lengths of $P$ and $Q$, respectively, and
assume without loss of generality that $k < m \leq n$. The classic
textbook solution to the problem, due to Sellers~\cite{Sellers1980},
fills in an $(m+1) \times (n+1)$ \emph{distance matrix} $C$ such that
$C_{i,j}$ is the smallest edit distance between the $i$th prefix of
$P$ and a substring of $Q$ ending at position $j$. Using dynamic
programming, we can compute each entry in $C$ in constant
time leading to an algorithm using $O(nm)$ time.

Several improvements of this algorithm are known. Masek and
Paterson~\cite{MP1980} showed how to compactly encode and tabulate
solutions to small submatrices of the distance matrix. We can then
traverse multiple entries in the table in constant time leading to an
algorithm using $O(nm/\log^2 n + n)$ time. This bound assumes constant
size alphabets. For general alphabets, the best bound is $O(nm(\log
\log n)^2/\log^2 n + n)$~\cite{BFC2008}. This tabulation technique is
often referred to as the \emph{Four Russian technique} after Arlazarov
et al.~\cite{ADKF1970} who introduced it for boolean matrix
multiplication. Alternatively, several algorithms using the arithmetic
and logical operations of the word RAM to simulate the dynamic program
have been suggested~\cite{BYG1992, WM1992b, Wright1994, BYN1996,
  Myers1999, HN2005}. This technique is often referred to as
\emph{word-level parallelism} or \emph{bit-parallelism}. The best
known bound is due to Myers~\cite{Myers1999} who gave an algorithm
using $O(nm/w + n)$ time.  In terms of $n$ and $m$ alone, these are
the best known bounds for approximate string matching. However, if we
take into account the error threshold $k$, several faster algorithms
are known~\cite{UW1993, Ukkonen1985b, Myers1986, GG88, LV1989, GP90,
  SV96, CH2002}. These algorithms exploit properties of the diagonals
of the distance matrix $C$ and are therefore often called
\emph{diagonal transition algorithms}. The best known bound is due to
Landau and Vishkin~\cite{LV1989} who gave an $O(nk)$
algorithm. Compared to the algorithms by Masek and Paterson and by
Myers, the Landau-Vishkin algorithm (abbreviated LV-algorithm) is
faster for most values of $k$, namely, whenever $k = o(m/\log^{2} n)$
or $k = o(m/w)$. For $k = O(m^{1/4})$, Cole and Hariharan showed that
it is even possible to solve approximate string matching in $O(n)$
time. Their algorithm ``filters'' out all but a small set of positions
in $Q$ which are then checked using the LV-algorithm.

All of the above bounds are valid on a unit-cost RAM with $w$-bit
words and a standard instruction set including arithmetic operations,
bitwise boolean operations, and shifts. Each word is capable of
holding a character of $Q$ and hence $w \geq \log n$. The space
complexity is the number of words used by the algorithm, not counting
the input which is assumed to be read-only.  For simplicity, we assume
that suffix trees can be constructed in linear time which is true for
any polynomially sized alphabet~\cite{CFM2000}. This assumption is
also needed to achieve the $O(nk)$ bound of the Landau-Vishkin
algorithm~\cite{LV1989}. Without it, additional time for sorting the
alphabet is required~\cite{CFM2000}. All the results presented here
assume the same model.

\subsection{Results}
We present a new algorithm for approximate string matching achieving
the following bounds.
\begin{theorem}\label{thm:main}
Approximate string matching for strings $P$ and $Q$ of length $m$ and $n$, respectively, with error threshold $k$ can be solved 
\begin{itemize}
  \item[(i)] in time $O(nk\cdot\frac{\log^2 m}{\log n} + n)$ and space $O(n^\epsilon + m)$, for any constant $\epsilon > 0$, and 
  \item[(ii)] in time $O(nk\cdot\frac{\log^2 m\log w}{w} + n)$ and space $O(m)$.
\end{itemize}
\end{theorem}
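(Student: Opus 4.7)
The plan is to accelerate the Landau--Vishkin (LV) algorithm by advancing many diagonals of its table in parallel within a single machine word. Recall that LV maintains, for every error count $e\in[0,k]$ and every diagonal $d$, the largest row $L[d,e]$ in the distance matrix $C$ reachable with $e$ errors; each entry is obtained from row $e-1$ via the max recurrence $L[d,e] = \max\{L[d-1,e-1],\,L[d,e-1]+1,\,L[d+1,e-1]+1\}$ followed by an LCP extension $L[d,e] \leftarrow L[d,e] + \lcp(P[L[d,e]+1..],\,Q[L[d,e]+d+1..])$. There are $\Theta(nk)$ such entries; I will compute them $b$ at a time.

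The first step is bit-packing. Since every $L$-value lies in $[-1,m]$ each field uses $O(\log m)$ bits, so $b$ consecutive entries of one row fit in one word whenever $b\log m = O(w)$. I would choose $b = \Theta(\log n/\log m)$ for part (i), so that the packed representation of a batch indexes a lookup table of size $n^{\epsilon}$, and $b = \Theta(w/\log m)$ for part (ii), so that the entire word is exploited. The max part of the recurrence is then implemented in $O(1)$ word operations per batch using shifts, masked addition, and a componentwise-maximum primitive via standard bit tricks.

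The main obstacle, and the critical step, is the LCP extension: the $b$ packed fields must be increased by $b$ different LCP lengths $\lcp(P[i..],\,Q[j..])$, one per diagonal in the batch. My plan is to preprocess $P$ and $Q$ so that LCP queries reduce to constant-time nearest-common-ancestor queries on the generalized suffix tree, and then resolve each batch by an exponential-doubling search: maintain for each diagonal in the word a candidate extension length $\ell$, doubling it while the next $\ell$ characters still match and halving it after a failure. In parallel across the $b$ diagonals, each doubling step is implemented by either (a)~a single lookup in an $O(n^{\epsilon})$-sized table whose input is the packed pair of suffix identifiers (part (i)), or (b)~$O(\log w)$ word-level boolean/arithmetic operations performing the same $b$-way comparison without a large table (part (ii)). After $O(\log m)$ doubling rounds every diagonal's LCP is known, so one batch costs $T=\log m$ in case (i) and $T=\log m\log w$ in case (ii).

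The hard part is engineering the tabulation so that (a)~the lookup table really does fit in $n^{\epsilon}$ words while being rich enough to resolve a whole doubling step in one probe, (b)~the packed encoding of $b$ concurrent NCA queries supports that doubling step compositionally, and (c)~the alternative word-level procedure of case (ii) avoids any structure larger than $O(m)$ words. Once these are in place, a single row of $L$ is processed in $O(((n+k)/b)\cdot T)$ time, and summing over the $k+1$ rows gives $O(nk\log^2 m/\log n)$ for (i) and $O(nk\log^2 m\log w/w)$ for (ii). A final $O(n)$-time pass over $L[\cdot,k]$ outputs the matching end positions, yielding the stated additive $n$; the space bounds $O(n^{\epsilon}+m)$ and $O(m)$ follow directly from the chosen table sizes and the suffix tree preprocessing.
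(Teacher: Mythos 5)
There is a genuine gap, and it sits exactly where you flag ``the hard part'': your mechanism for the parallel LCP extension does not work as described. A doubling step must decide, for each of the $b$ diagonals, whether the next $\ell$ characters of $P$ and $Q$ agree; for $\ell$ up to $m$ this touches $\Theta(b\ell)$ fresh characters, so it is not a constant number of word operations, and no input-independent table of size $n^{\epsilon}$ indexed by $O(\log n)$ bits of ``packed suffix identifiers'' can answer it unless those identifiers already encode the LCP information you are trying to compute. If instead the table is built from $P$ and $Q$ themselves so that it does encode the answers, it is input-dependent: it must be rebuilt for each text (indeed for each of the $O(n/m)$ text chunks), and its index would need position pairs of $\Theta(\log n)$ bits each, so $b$ of them do not fit in $\epsilon\log n$ bits. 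The paper escapes this dilemma with two ideas you would need to supply. First, it uses the Alstrup et al.\ NCA labeling scheme, whose labels are \emph{self-describing}: $\lab(\nca(v,w))$ is computable from $\lab(v)$ and $\lab(w)$ alone, so the elementwise $\Lnca$ on a word of packed labels is an input-independent function of $O(\epsilon\log n)$ bits and can be tabulated once (or, for part (ii), computed by $O(\log f)$ bit operations exploiting the part structure of the labels). Second, translating packed positions into packed leaf labels, and packed NCA labels into packed string depths, is itself a nontrivial parallel gather; the paper formalizes it as a $\Map$ operation and implements it by packed (bitonic) sorting, which is where the second $\log m$ factor in the bound actually comes from --- not from a doubling search.

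Two smaller omissions. You never restrict to $O(n/m)$ overlapping windows of $Q$ of length $2m-2k$; without this the suffix-tree leaves of $Q$ need $\Omega(\log n)$-bit identifiers, only $O(w/\log n)$ of them fit per word, and the $O(m)$ space bound of part (ii) is lost. And your per-row cost $O(((n+k)/b)\cdot T)$ silently assumes the max-recurrence and the LCP results can be scattered back into diagonal order for free; since the LCP answers are produced in a permuted (sorted-by-query) order, undoing that permutation is another packed sort, which your accounting does not cover. The overall packing strategy and the final arithmetic are right, but the proposal leaves unproved precisely the steps that constitute the paper's contribution.
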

When $P$ is short, namely, $m = 2^{o(\sqrt{\log n})}$ or $m =
2^{o(\sqrt{w/\log w})}$, this improves the $O(nk)$ time bound and
places a new upper bound on approximate string matching. For many
practically relevant combinations of $n$, $m$ and $k$ this
significantly improves the previous results. For instance, when $m$ is
polylogarithmic in $n$, that is, $m = O(\log^c n)$ for a constant $c >
0$, Theorem~\ref{thm:main}(i) gives us an algorithm using time $O(nk
\cdot \frac{(\log \log n)^{2}}{\log n} + n)$. This is almost a
logarithmic speed-up of $O(\frac{\log n}{(\log \log n)^2})$ over the
$O(nk)$ bound. Note that the exponent $c$ only affects the constants
in asymptotic time bound. For larger $m$, the speed-up smoothly
decreases until $m = 2^{\Theta(\sqrt{\log n})}$, where we arrive at
the $O(nk)$ bound.

The algorithm for Theorem~\ref{thm:main}(i) tabulates certain
functions on $\epsilon \log n$ bits which lead to the additional
$O(2^{\epsilon\log n}) = O(n^{\epsilon})$ space. The algorithm for
Theorem~\ref{thm:main}(ii) instead uses word-level parallelism and
therefore avoids the additional space for lookup tables. Furthermore,
for $w = O(\log n)$, Theorem~\ref{thm:main}(ii) gives us an algorithm
using time $O(nk \cdot \frac{\log^{2} m\log \log n}{\log n} +
n)$. This is a factor $O(\log \log n)$ slower than
Theorem~\ref{thm:main}(i). However, the bound increases with $w$ and
whenever $w\log w = \omega(\log n)$, Theorem~\ref{thm:main}(i) is the
best time bound.

\subsection{Techniques}
The key idea to obtain our bounds is a novel implementation of the
LV-algorithm that reduces approximate string matching to $2$
operations on a compact encoding of the ``state'' of the LV-algorithm.
We show how to implement these operations using tabulation for
Theorem~\ref{thm:main}(i) or word-level parallelism for
Theorem~\ref{thm:main}(ii). As discussed above, several improvements
of Sellers classical dynamic programming algorithm~\cite{Sellers1980}
based on tabulation and word-level parallelism are known. However, for
diagonal transition algorithms no similar tabulation or word-level
parallelism improvements exists. Achieving such a result is also
mentioned as an open problem in a recent survey by
Navarro~\cite[p.61]{Navarro2001a}. The main problem is the complicated
dependencies in the computation of the LV-algorithm. In particular, in
each step of the LV-algorithm we map entries in the distance matrix to
nodes in the suffix tree, answer a nearest common ancestor query, and
map information associated with the resulting node back to an entry in
the distance matrix. To efficiently compute this information in
parallel, we introduce several new techniques. These techniques differ
significantly from the techniques used to speed-up Sellers algorithm, 
and we believe that some of them might be of independent interest. For
example, we give a new algorithm to efficiently evaluate a compact
representation of a function on several inputs in parallel. We also
show how to use a recent distributed nearest common ancestor data
structure to efficiently answer multiple nearest common ancestor
queries in parallel.

The results presented in this paper are mainly of theoretical
interest. However, we believe that some of the ideas have practical
relevance. For instance, it is often reported that the nearest common
ancestor computations make the LV-algorithm unsuited for practical
purposes~\cite{Navarro2001a}. With our new algorithm, we can compute
several of these in parallel and thus target this bottleneck.

\subsection{Outline}
The paper is organized as follows. In Section~\ref{sec:preliminaries}
we review the basic concepts and the LV-algorithm. In
Section~\ref{sec:bittricks} we introduce the packed representation and
the key operations needed to manipulate it. In
Section~\ref{sec:reduction} we reduce approximate string matching to
two operations on the packed representation. Finally, in
Sections~\ref{sec:lncamap} and Section~\ref{sec:word} we present our
tabulation based algorithm and word-level parallel algorithm for these
operations.

\section{Preliminaries}\label{sec:preliminaries}
We review the necessary concepts and the basic algorithms for
approximate string matching. We will use these as a starting point for
our own algorithms.

\subsection{Strings, Trees, and Suffix Trees}
Let $S$ be a string of length $|S|$ on an alphabet $\Sigma$. The
character at position $i$ in $S$ is denoted by $S[i]$, and the
substring from position $i$ to $j$ is denoted by $S[i,j]$. The
substrings $S[1,j]$ and $S[i, |S|]$ are the \emph{prefixes} and
\emph{suffixes} of $S$, respectively. The \emph{longest common prefix}
of two strings is the common prefix of maximum length.

Let $T$ be a rooted tree with $|T|$ nodes. A node $v$ in $T$ is an
\emph{ancestor} of a node $w$ if $v$ is on the path from the root to
$w$ (including $v$ itself). A node $z$ is a \emph{common ancestor} of
nodes $v$ and $w$ if $z$ is an ancestor of both. The \emph{nearest
  common ancestor} of $v$ and $w$, denoted $\nca(v,w)$, is the common
ancestor of $v$ and $w$ of maximum depth in $T$. With linear space
and preprocessing time, we can answer $\nca$ queries in constant
time~\cite{HT1984} (see also~\cite{BFC2000,AGKR2004}).

The \emph{suffix tree} for $S$, denoted $T_S$, is the compacted trie
storing all suffixes of $S$~\cite{Gusfield1997}. Each edge $e$ in
$T_S$ is associated with a substring of $S$, called the
\emph{edge-label} of $e$.  The concatenation of edge-labels on a path
from the root to a node $v$ is called the \emph{path-label} of $v$.
The \emph{string-depth} of $v$, denoted $\strdepth(v)$, is the length
of the path-label of $v$. The $i$th suffix of $S$ is represented by
the unique leaf in $T_S$ whose path-label is $S[i, |S|]$, and we denote
this leaf by $\leaf(i)$. The suffix tree uses linear space and can be
constructed in linear time for polynomially sized
alphabets~\cite{CFM2000}.

A useful property of suffix trees is that for any two leaves
$\leaf(i)$ and $\leaf(j)$, the path label of the node $\nca(\leaf(i),
\leaf(j))$ is longest common prefix of the suffixes $S[i,|S|]$ and
$S[j,|S|]$~\cite{Gusfield1997}. Hence, if we construct a nearest
common ancestor data structure for $T_S$ and compute the string depth
for each node in $T_S$, we can compute the length of the longest common
prefix of any two suffixes in constant time.

For a set of strings $S_1, \ldots, S_l$ it is straightforward to
construct a suffix tree $T_{S_1, \ldots, S_l}$ storing all suffixes of
each string in $S_1, \ldots, S_l$~\cite{Gusfield1997}. A suffix tree
of more than one string is often called a \emph{generalized suffix
  tree}~\cite{Gusfield1997}. The space for $T_{S_1, \ldots, S_l}$ is
linear in the total length of the strings.

\subsection{Algorithms for Approximate String Matching}
Recall that $|P| = m$ and $|Q| = n$ and $k$ is the error
threshold. The algorithm by Sellers~\cite{Sellers1980} fills in a
$(m+1) \times (n+1)$ matrix $C$ according to the following rules:
\begin{equation}\label{Sellersrecursion}
\begin{aligned}
  C_{i,0}  &= i  \qquad 0 \leq i \leq m \\ 
  C_{0,j}  &= 0   \qquad 0 \leq j \leq n \\
  C_{i,j}   &= \min(C_{i-1, j-1} + \delta(p_i, t_j), C_{i-1, j} + 1,
  C_{i, j-1} + 1) \qquad 1 \leq i \leq m, 1 \leq j \leq n
\end{aligned}
\end{equation}
For any pair of characters $a$ and $b$, $\delta(a, b) = 0$ if $a = b$
and $1$ otherwise. An example of a matrix is shown in
Figure~\ref{fig:dynmatrix}. Note that the above rules are the same as
for the classical dynamic program for the well-known edit distance
problem~\cite{WF1974}, except for the boundary condition $C_{0,j} =
0$. The entry $C_{i,j}$ is the minimum edit distance between $P[1,i]$
and any substring of $Q$ ending at position $j$. Hence, there is a match of $P$
with a most $k$ edits that ends at $Q[j]$ iff $C_{m, j} \leq k$. Using
dynamic programming, we can compute each entry in constant time
leading to an $O(nm)$ solution.

Landau and Vishkin~\cite{LV1989} presented a faster algorithm to
compute essentially the same information as
in~\eqref{Sellersrecursion}. We will refer to this algorithm as the
LV-algorithm in the rest of the paper. Define the \emph{diagonal} $d$
of $C$ to be the set of entries $C_{i,j}$ such that $j-i = d$. Given a
diagonal $d$ and integer $e$, define the diagonal position $L_{d,e}$
to be the maximum $i$ such that $C_{i,j} = e$ and $C_{i,j}$ is on
diagonal $d$.  There is a match of $P$ with a most $k$ edits that ends
at $Q[d+m]$ iff $L_{d,e} = m$, for some $e\leq k$. Let $\lcp(i,j)$
denote the length of the longest common prefix of $P[i,m]$ and
$Q[j,n]$.  Using the clever observation that entries in a diagonal are
non-decreasing in the downwards direction, Landau and Vishkin gave the
following rules to compute $L_{d,e}$.
\begin{subequations}\label{LVrecursion}
  \begin{align}
  L_{d,-1}  &= L_{n+1, e} = -1 \qquad \text{for $e \in \{-1, \ldots, k\}$ and $d \in \{0, \ldots, n\}$}   \label{LVrecursion1}\\ 
  L_{d,|d| - 2}  &= |d| - 2 \qquad \text{for $d \in \{-(k+1), \ldots, -1\}$}   \label{LVrecursion2}\\
  L_{d,|d| - 1}  &= |d| - 1 \qquad \text{for $d \in \{-(k+1), \ldots, -1\}$}   \label{LVrecursion3}\\
  L_{d,e} &= z + \lcp(z + 1, d + z + 1) \label{LVrecursion4}\\ 
  \text{where } z &= \min(m, \max(L_{d, e-1} + 1, L_{d-1, e-1},
  L_{d+1, e-1} + 1)) \label{LVrecursion5}
\end{align}
\end{subequations}
Lines \eqref{LVrecursion1}, \eqref{LVrecursion2}, and
\eqref{LVrecursion3} are boundary conditions. Lines
\eqref{LVrecursion4} and \eqref{LVrecursion5} determine $L_{d,e}$ from
$L_{d, e-1}$, $L_{d-1, e-1}$, $L_{d+1, e-1}$, and the length of the
longest common prefix of $P[z+1, m]$ and $Q[d+z +1,n]$. Hence, we can
compute a matrix $L$ of diagonal positions by iteratively computing
the sets of diagonal positions $L_{-1}, L_{0}, \ldots, L_{k}$, where
$L_{e}$ denotes the set of entries in $L$ with error $e$. Since we can
compute $\lcp$ queries in constant time using a nearest common
ancestor data structure, the total time to fill in the $O(nk)$ entries of
$L$ is $O(nk)$. Each set of diagonal positions and the suffix tree
require $O(n)$ space. However, we can always divide $Q$ into
overlapping substrings of length $2m - 2k$ with adjacent substrings
overlapping in $m+k-1$ characters. A substring matching $P$ with at
most $k$ errors must have a length in the range $[m-k, m+k]$ and
therefore all matches are completely contained within a substring.
Applying the LV-algorithm to each of the substrings independently
solves approximate string matching in time $O(n/m \cdot mk) = O(nk)$
as before, however, now the space is only $O(m)$.

\begin{figure}
\centering
\begin{tabular}{| c | c | c | c | c | c | c | c | c |}
  \hline			
    &    & \texttt{s} & \texttt{u} &\texttt{r}  & \texttt{g} & \texttt{e} & \texttt{r}  & \texttt{y} \\\hline
    & 0 & 0 & 0 & 0 & 0 & 0 & 0 & 0 \\\hline
\texttt{s}  & 1 & 0 & 1 & 1 & 1 & 1 & 1 & 1 \\\hline
\texttt{u}  & 2 & 1 & 0 & 1 & 2 & 2 & 2 & 2 \\\hline
\texttt{r}   & 3 & 2 & 1 & 0 & 1 & 2 & 2 & 3 \\\hline
\texttt{v}   & 4 & 3 & 2 & 1 & 1 & 2 & 3 & 3 \\\hline
\texttt{e}   & 5 & 4 & 3 & 2 & 2 & 1 & 2 & 3 \\\hline
\texttt{y}   & 6 & 5 & 4 & 3 & 3 & 2 & 2 & 2 \\\hline  
\end{tabular}

\caption{The dynamic programming matrix $C$ for $P = \texttt{survey}$
  and $Q = \texttt{surgery}$ (adapted from
  Navarro~\cite{Navarro2001a}). $P$ matches $Q$ with edit distance $2$
  at positions $5$, $6$, and $7$. In diagonal $1$, the maximum rows
  indices containing $0$, $1$, and $2$ are $0$, $3$, and $6$,
  respectively. Hence, $L_{1,0} = 0$, $L_{1,1} = 3$, and $L_{1,2} =
  6$.}
\label{fig:dynmatrix}
\end{figure}

\section{Manipulating Bits}\label{sec:bittricks}
In this section we introduce the necessary notation and key primitives for manipulating bit strings. 

Let $x = b_f \ldots b_1$ be a bit string consisting of bits
$b_{1},\ldots, b_{f}$ numbered from right-to-left. The \emph{length}
of $x$, denoted $|x|$, is $f$. We use exponentiation for bit
repetition, i.e., $0^31 = 0001$ and $\cdot$ for concatenation, i.e.,
$001 \cdot 100 = 001100$. In addition to the arithmetic operators $+$,
$-$, and $\times$ we have the operators $\:\&\:$, $\mid$, and $\oplus$
denoting bit-wise `and', `or', and `exclusive-or',
respectively. Moreover, $\overline{x}$ is the bit-wise `not' of $x$
and $x \ll j$ and $x \gg j$ denote standard left and right shift by
$j$ positions. The word RAM supports all of these above operators for
bit strings stored in single words in unit
time~\cite{Hagerup1998}. Note that for bit strings of length $O(w)$
(recall that $w$ is the number of bits in a word) we can still
simulate these instructions in constant time.

We will use the following nearest common ancestor data structure based
on bit string labels in our algorithms.
\begin{theorem}[Alstrup et al.~\cite{AGKR2004}]\label{thm:nca}
  There is a linear time algorithm that labels the $t$ nodes of a tree
  $T$ with bit strings of length $O(\log t)$ bits such that from the
  labels of nodes $v$ and $w$ in $T$ alone, one can compute the label
  of $\nca(v, w)$ in constant time.
\end{theorem}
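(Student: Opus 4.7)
The plan is to build the labeling on top of a heavy path decomposition of $T$ and then compress it with word-RAM bit tricks. First, I would compute the heavy path decomposition in $O(t)$ time by designating, at each internal node, the child with the largest subtree as the \emph{heavy} child and the others as \emph{light}. A standard counting argument gives that any root-to-node path crosses at most $O(\log t)$ light edges, so although $T$ may decompose into $\Theta(t)$ heavy paths overall, the sequence of heavy paths intersected by any single root-to-node path has length only $O(\log t)$.

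Next, I would assign each node $v$ a label with two parts: (a) an encoding of the sequence of heavy paths met on the root-to-$v$ walk, together with the depth on each path at which the walk exits via a light edge, and (b) the position of $v$ on its own heavy path. Given the labels of $v$ and $w$, the deepest common heavy-path ancestor $H$ of $v$ and $w$ corresponds to the longest common prefix of their two heavy-path sequences; $\nca(v,w)$ is then the shallower of the two depths at which $v$ and $w$ depart $H$, with a symmetric case when one of them already lies on $H$. Both the longest common prefix and the minimum of two positions can be computed in $O(1)$ time on the word RAM via XOR followed by a most-significant-bit operation, yielding the promised constant-time decoding.

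The main obstacle is keeping the total label length $O(\log t)$ rather than the naive $O(\log^2 t)$ that one gets from concatenating $O(\log t)$ separate path-identifiers of $O(\log t)$ bits each. The idea is to define labels inductively from parent to child: a heavy child inherits its parent's label and appends $O(1)$ bits for its new on-path depth, while a light child extends the parent's label by a self-delimiting code whose length is roughly $\log(s_{\mathrm{parent}}/s_{\mathrm{child}})$, where $s_{(\cdot)}$ denotes subtree size. Since a light child's subtree is at most half its parent's, these increments telescope to $O(\log t)$ bits along any root-to-leaf path, and a careful choice of codes keeps all of each label in a single word. This is the technically delicate step where the Alstrup--Gavoille--Kaplan--Rauhe construction does its real work; linear preprocessing and the constant-time nca extraction above then follow directly from the resulting encoding.
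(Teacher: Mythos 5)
This theorem is cited, not proved, in the paper: the only material the paper supplies is a review of the resulting label structure in Section~\ref{sec:labelingscheme} (alternating heavy and light parts encoded with prefix-free codes, plus the boundary sublabels $b(v)$ and $l(v)$) and the decoding rule of Lemma~\ref{lem:nca}. Your sketch is reconstructing exactly that construction --- heavy path decomposition, $O(\log t)$ light edges on any root-to-node path, telescoping variable-length codes --- so you are on the intended route rather than proposing an alternative.

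There is, however, a concrete flaw in the one place where you commit to details. You assert that a heavy child ``inherits its parent's label and appends $O(1)$ bits for its new on-path depth'' and reserve the telescoping code only for light children. Read literally, this yields a label of length $\Theta(t)$ on a path graph (a single heavy path), since every step appends a constant number of bits. Read charitably --- the on-path depth is a counter that is updated rather than appended --- the problem reappears when the walk leaves a heavy path: the exit position must be frozen into the label, a root-to-node walk can cross $\Theta(\log t)$ heavy paths, and a plain binary exit position costs $\Theta(\log t)$ bits each, which is precisely the $O(\log^2 t)$ bound you set out to beat. The crux of the Alstrup et al.\ construction is that positions \emph{on each heavy path} are themselves encoded with a weighted, prefix-free (alphabetic) code whose length at position $p$ is $O(\log(s_{\mathrm{top}}/s_{p}))$ in terms of subtree sizes, so the heavy parts telescope exactly as the light ones do; your proposal applies the telescoping idea only to the light-child selectors, which is the easy half. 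A secondary omission: constant-time decoding requires locating, inside a variable-length concatenation of parts, the first part boundary where the two sequences diverge and whether the distinguishing part is heavy or light --- this is what the sublabels $b(v)$ and $l(v)$ and the case analysis of Lemma~\ref{lem:nca} supply, and XOR plus most-significant-bit alone does not. Neither issue is fatal to the approach, but as written the $O(\log t)$ length bound does not follow from your argument.
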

For our purposes, we will slightly modify the above labeling scheme
such that all labels have the same length $f = O(\log t)$. This is
straightforward to do and we will present one way to do it later in
Section~\ref{sec:labelingscheme}. Let $\lab(v)$ denote the label of a
node $v$ in $T$. The \emph{label nearest common ancestor}, denoted
$\lnca$, is the function given by $\lnca(\lab(v), \lab(w)) =
\lab(\nca(v,w))$ for any pair of labels $\lab(v)$ and $\lab(w)$ of
nodes $v$ and $w$ in $T$. Thus, $\lnca$ maps two bit strings of length
$f$ to a single bit string of length $f$.

\subsection{Packed Sequences}
We often interpret bit strings as sequences of smaller bit strings and
integers. For a sequence $x_1, \ldots, x_r$ of bit strings of length
$f$, define the \emph{$f$-packed sequence} $X = \angle{x_1, \ldots,
  x_r}$ to be the bit string
\begin{equation*}
0 \cdot x_r \cdot 0 \cdot x_{r-1} \cdots 0 \cdot x_2 \cdot 0 \cdot x_1 
\end{equation*}
Each substring $0 \cdot x_i$, $1 \leq i \leq r$, is a
\emph{field}. The leftmost bit of a field is the \emph{test bit} and
the remaining $f$ bits, denoted $X\angle{i} = x_i$, is the
\emph{entry}. The \emph{length} of a $f$-packed sequence is the number
of fields in it. Note that a $f$-packed sequence of length $r$ is
represented by a bit string of length $r(f+1)$. If $x_1, \ldots, x_r$
is a sequence of $f$-bit integers, $\angle{x_1, \ldots, x_r}$ is
interpreted as $\angle{\bin(x_1), \ldots, \bin(x_r)}$, where $\bin(x)$
is the binary encoding of $x$. We represent packed sequences compactly
in words by storing $s = \floor{w/(f+1)}$ fields per word. For our
purposes, we will always assume that ﬁelds are capable of storing the
total number of ﬁelds in the packed sequence, that is, $f \geq \log
r$.  Given another $f$-packed sequence $Y = \angle{y_1, \ldots, y_r}$,
the \emph{zip} of $X$ and $Y$, denoted $X \ddagger Y$, is the
$2f$-packed sequence $\angle{(x_1, y_1), \ldots, (x_r, y_r)}$ (the
tuple notation $(x_{i}, y_{i})$ denotes the bit string $x_{i} \cdot
y_{i}$). Packed sequence representations are well-known within sorting
and data structures (see, e.g., the survey by
Hagerup~\cite{Hagerup1998}). In the following we review some basic
operations on them.

Let $X = \angle{x_{1}, \ldots, x_{s}}$ and $Y = \angle{y_{1}, \ldots,
  y_{s}}$ be $f$-packed sequences of length $s =
\floor{w/(f+1)}$. Hence, $X$ and $Y$ can each be stored in a single word
of $w$ bits. We consider the general case of longer packed sequences
later. Some of our operations require precomputed constants depending
on $s$ and $f$, which we assume are available (e.g., computed at
``compile-time''). If this is not the case, we can always precompute
these constants in time $\log^{O(1)} w$ which is
neglible. %For instance, we will often need the constant $I_{s,f} = (10^f)^{s}$ consisting of $1$'s at all test bit positions in a word. We can compute $I_{s,f}$ in $O(\log s) = O(\log w)$ time using the fact that $I_{2x,f} = I_{x,f} \ll r(f+1) \mid I_{x,f}$. Note that with $1_{s,f}$ computed we can also produce $1_{r,f}$ for any $r < s$ in constant time by shifting.

Elementwise arithmetic operations (modulo $2^f$) and bit-wise
operations are straightforward to implement in $O(1)$ time using the
built-in operations. For example, to compute $\angle{x_1 + y_1 \mod
  2^f, \ldots, x_s + y_s \mod 2^f}$, we add $X$ and $Y$ and clear the
test bits by $\:\&\:$'ing with the constant $I_{s,f} = (10^f)^{s}$
($I_{s,f}$ consists of $1$'s at all test bit positions). The test bit
positions ensures that no overflow bits from the addition can affect
neighbouring entries.

The \emph{compare} of $X$ and $Y$ with respect to an operator $\bowtie \: \in
\{=, \neq, \geq, \leq\}$, is the bit string $C$, where all entries are
$0$ and the $i$th test bit is $1$ iff $x_i \bowtie y_i$. For the
$\geq$ operator, we compute the compare as follows. Set the test bits
of $X$ by $\mid$'ing with $I_{s,f}$, then subtract $Y$, and mask out
the test bits by $\:\&\:$'ing with $I_{s,f}$. It is straightforward to
show that the $i$th test bit in the result ``survives'' the
subtraction iff $x_i \geq y_i$. The entire operation takes $O(1)$
time. We can similarly compute the compare with respect to the other
operators ($=$, $\neq$, and $\leq$) in constant time.

Given a sequence of test bits $t_1, \ldots, t_s$ stored at test bit
position in a bit string $T$, i.e., $T = t_s \cdot 0^f \ldots t_1
0^f$, the \emph{extract} of $X$ with respect to $T$, is the $f$-packed
sequence $E$ given by
\begin{equation*}
E\angle{i} =
\begin{cases}
     x_i & \text{if $t_i = 1$}, \\
      0 & \text{otherwise}.
\end{cases}
\end{equation*}
We compute the extract operation as follows. First, copy each test bit
to all positions in their field by subtracting $(I_{s,f} \gg f)$ from
$T$. Then, $\&$ the result with $X$. Again, the operation takes $O(1)$
time. We can combine the compare and extract operation to compute more
complicated operations. For instance, to compute the elementwise
maximum $M = \angle{\max(x_1, y_1),\ldots, \max(x_s,y_s)}$, compare
$X$ and $Y$ with respect to $\geq$ and let $T$ be the result. Extract
from $X$ with respect to $T$, the packed sequence $M_X$ containing
all entries in $X$ that are greater than or equal to the corresponding
entry in $Y$. Also, extract from $Y$ with respect to $\overline{T}
\:\&\: I_{r,f}$, the packed sequence $M_Y$ containing all entries in
$Y$ that are greater than or equal to the corresponding entry in
$X$. Finally, combine $M_X$ and $M_Y$ into $M$ by $\mid$'ing them.

Let $z$ be a $f$-bit integer. The \emph{rank} of $z$ in $X$, denoted
by $\rank(X,z)$, is the number of entries in $X$ smaller than or equal
to $z$. We can compute $\rank(X, z)$ in constant time as
follows. First, replicate $z$ to all fields in a words by computing $Z
= z \times 1(0^{f}1)^{s} = \angle{z, \ldots, z}$. Then, compare $X$
and $Z$ with respect to $\geq$ and store the result in a word $C$. The
number of $1$ bits in $C$ is $\rank(X,z)$. To count these, we compute
the \emph{suffix sum} of the test bits by multiplying $C$ with
$(0^{f}1)^{s}$. This produces a word $P$ such that $P\angle{i}$ is number
of test bits in the rightmost $i$ field of $C$. Finally, we extract
$P\angle{s}$ as the result. Note that the condition $f \geq \log r$ is
needed here.

All of the above $O(1)$ time algorithms, except $\rank$, are
straightforward to generalize efficiently to longer $f$-packed
sequences.  For $f$-packed sequences of length $r > s$ the time
becomes $O(r/s + 1) = O(rf/w + 1)$.

We will also need more sophisticated packed sequence
operations. First, define a \emph{$f$-packed function} of length $u$
to be a $2f$-packed sequence $G = \angle{(z_{1}, g(z_{1})), \ldots,
  (z_{u}, g(z_{u}))}$, where $z_{1} <\cdots < z_{u}$ and $g$ is any
function mapping a bit string of length $f$ to a bit string of length
$f$. The \emph{domain} of $G$, denoted $\dom(G)$, is the sequence
$\angle{z_{1}, \ldots, z_{u}}$. Let $X = \angle{x_{1}, \ldots, x_{r}}$
and $Y = \angle{y_{1}, \ldots, y_{r}}$ be $f$-packed sequences and let
$G$ be a $f$-packed function such that each entry in $X$ appears in
$\dom(G)$. Define the following operations.

%We sometimes divide the entries into \emph{subentries} and write $X\angle{i} = (a_1, a_2, \ldots, a_l)$ to denote the entry storing the bit string $a_1 \cdots a_l$. Given another $f$-packed sequence $Y = \angle{y_1, \ldots, y_r}$ we write $X \subseteq Y$ if each field in $X$ appears in $Y$. The \emph{zip} of $X$ and $Y$, denoted $X \ddagger Y$ is the $2f$-packed sequence $\angle{(x_1, y_1), \ldots, (x_r, y_r)}$. 
 \begin{relate}
\item[$\Map(G, X):$] Return the $f$-packed sequence $\angle{g(x_1), \ldots, g(x_r)}$. 
\item[$\Lnca(X, Y):$] Return the $f$-packed sequence $\angle{\lnca(x_1, y_1), \ldots, \lnca(x_r, y_r)}$. 
\end{relate}
In other words, the $\Map$ operation applies $g$ to each entry in $X$
and $\Lnca$ is the elementwise version of the $\lnca$ operation. We
believe that an algorithm for these operations might be of independent
interest in other applications. In particular, the $\Map$ operation
appears to be a very useful primitive for algorithms using packed
sequences. Before presenting our algorithms for $\Map$ and $\Lnca$, we
show how they can be used to implement the LV-algorithm.

\section{From Landau-Vishkin to Mapping and Label Nearest Common Ancestor}
In this section we give an implementation of the LV-algorithm based on
the $\Map$ and $\Lnca$ operations. Let $P$ and $\widehat{Q}$ be
strings of length $m$ and $2m-2k$ and $k$ be an error
threshold. Recall from Section~\ref{sec:preliminaries} that an
algorithm for this case immediately generalizes to find approximate
matches in longer strings. We preprocess $P$ and $\widehat{Q}$ and
then use the constructed data structures to efficiently implement the
LV-algorithm.

\subsection{Preprocessing}
We compute the following information. Let $r = O(m)$ be the number of
diagonals in the LV-algorithm on $P$ and $\widehat{Q}$.
\begin{itemize}
\item The (generalized) suffix tree, $T_{P,\widehat{Q}}$, of $P$ and $\widehat{Q}$
  containing $O(m)$ nodes and leaves. The leaf representing suffix $i$
  in $P$ is denoted $\leaf(P,i)$, and the leaf representing suffix $j$
  in $\widehat{Q}$ is denoted $\leaf(\widehat{Q},j)$.

\item Nearest common ancestor labels for the nodes in
  $T_{P,\widehat{Q}}$ according to Theorem~\ref{thm:nca}. Hence, the
  maximum length of labels is $f = O(\log m)$. We denote the label for
  a node $v$ by $\lab(v)$.
\item The $f$-packed functions $N_{P}$, $N_{\widehat{Q}}$, and $D$,
  representing the functions given by $n_P(i) = \lab(\leaf(P,i))$, for
  $i \in \{1, \ldots, m\}$, $n_{\widehat{Q}}(j) =
  \lab(\leaf(\widehat{Q}, j))$, for $j \in \{1, \ldots, 2m-2k\}$, and
  $d(\lab(v)) = \strdepth(v)$, for any node $v$ in $T_{P,\widehat{Q}}$.
\item The $f$-packed sequences $1_{r,f}$ and $M_{r,f}$ consisting of
  $r$ copies of $1$ and $m$, respectively,  and the $f$-packed sequence
  $J_{r,f} = \angle{1, 2,\ldots, r}$.
\end{itemize}
Since $r = O(m)$, the space and preprocessing time for all of the above information is $O(m)$. 

\subsection{A Packed Landau-Vishkin Algorithm}\label{sec:reduction}
Recall that the LV-algorithm iteratively computes the sets of diagonal
positions $L_{-1}, \ldots, L_k$, where $L_e$ is the set of entries in
$L$ with error $e$. To implement the algorithm we represent each of
the sets of diagonal positions as $f$-packed sequences of length
$r$. We construct $L_{-1}$ by inserting each field in constant time
according to \eqref{LVrecursion}. After computing $L_k$, we inspect
each field in constant time and report any matches. These steps take
$O(r) = O(m)$ time in total. We show how to compute the remaining sets
of diagonal positions. Given $L_{e-1}$, $e \in \{0, \ldots, k\}$, we
compute $L_e$ as follows. First, fill in the $O(1)$ boundary fields
according to \eqref{LVrecursion1}, \eqref{LVrecursion2}, and
\eqref{LVrecursion3}. Then, compute the remaining fields using the
following $4$ steps.
\paragraph{Step 1: Compute Maximum Diagonal Positions}
Compute the $f$-packed sequence $Z$ given by 
\begin{equation*}
Z\angle{d} := \min\left(m, \max(L_{e-1}\angle{d} + 1, L_{e-1}\angle{d-1} , L_{e-1}\angle{d+1} + 1)\right).
\end{equation*} 
Thus, $Z$ corresponds to the ``$z$" part in \eqref{LVrecursion5}. We
compute $Z$ efficiently as follows. First, construct the packed
sequences $Z_1\angle{d} := L_{e-1}\angle{d} + 1$, $Z_2\angle{d} :=
L_{e-1}\angle{d-1}$, and $Z_3\angle{d} := L_{e-1}\angle{d+1} + 1$ 
by shifting and adding $1_{r,f}$. Then, compute the elementwise
maximum of $Z_1$, $Z_2$, and $Z_3$, and finally, the elementwise
minimum with $M_{r,f}$. 

\paragraph{Step 2: Translate to Suffixes}
Compute the $f$-packed sequences $Z_P$ and $Z_{\widehat{Q}}$ given by 
\begin{align*}
Z_P\angle{d} &:= Z\angle{d} + 1, \\
Z_{\widehat{Q}}\angle{d} &:= Z\angle{d} + d + m.
\end{align*}
Hence, $Z_P\angle{d}$ and $Z_{\widehat{Q}}\angle{d}$ contains the
inputs to the $\lcp$ part in \eqref{LVrecursion4}. We can compute
$Z_P$ by adding $1_{r,f}$ to $Z$ and $Z_{\widehat{Q}}$ by adding
$J_{r,f}$ and $M_{r,f}$ to $Z$. 

\paragraph{Step 3: Compute Longest Common Prefixes}
Compute the $f$-packed sequence $\LCP$ given by 
\begin{equation*}
\LCP := \Map(D, \Lnca(\Map(N_P, Z_P), \Map(N_{\widehat{Q}}, Z_{\widehat{Q}}))).
\end{equation*}
This corresponds to the computation of $\lcp$ in \eqref{LVrecursion4}. 

\paragraph{Step 4: Update State}
Finally, compute the new sequence $L_e$ of diagonal positions as 
\begin{equation*}
L_e\angle{d} = Z\angle{d} + \LCP\angle{d}.
\end{equation*}
This corresponds to the $+$ in \eqref{LVrecursion4}. 

\medskip

Steps $1$, $2$, and $4$ takes $O(rf/w + 1) = O(m\log m/w + 1)$
time. Note that a set of diagonal positions of LV-algorithm requires
$O(m\log m)$ bits to be represented. Hence, to simply output a set of
diagonal positions we must spend at least $\Omega(m\log m/w)$ time.

We parameterize the complexity for approximate string matching in
terms of the complexity for the $\Lnca$ and $\Map$ operations.
\begin{lemma}\label{lem:redux}
  Let $P$ and $Q$ be strings of length $m$ and $n$, respectively, and
  let $k$ be an error threshold. Given a data structure using $s$
  space and $p$ preprocessing time that supports $\Map$ and $\Lnca$ in
  time $q$ on $O(\log m)$-packed sequences of length $O(m)$, we can
  solve approximate string matching in time $O\left(\frac{nk}{m} \cdot
    q + \frac{nk\log m}{w} + p + n\right)$ and space $O(s + m)$.
\end{lemma}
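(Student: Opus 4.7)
The plan is to reduce the problem on the full $Q$ to $O(n/m)$ independent invocations of the packed LV-algorithm of Section~\ref{sec:reduction} on short substrings, and then sum up the costs. First I would cut $Q$ into $O(n/m)$ overlapping windows of length $2m-2k$ with consecutive windows sharing $m+k-1$ characters, exactly as in the space-reduction argument recalled in Section~\ref{sec:preliminaries}. Since any occurrence of $P$ with at most $k$ errors has length in $[m-k,m+k]$, every match lies fully inside at least one window, so running the packed LV-algorithm on each window independently and reporting its matches is sufficient for correctness.

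Next I would bound the cost per window $\widehat Q$. The per-window preprocessing, i.e., building $T_{P,\widehat Q}$, computing the length-$f$ nearest common ancestor labels via Theorem~\ref{thm:nca} with $f=O(\log m)$, and packing $N_P$, $N_{\widehat Q}$, $D$, $1_{r,f}$, $M_{r,f}$, $J_{r,f}$, takes $O(m)$ time and $O(m)$ space. The LV iteration then runs for $k$ rounds; in each round Steps~1, 2, and~4 manipulate $f$-packed sequences of length $r=O(m)$ stored in $O(m\log m/w)$ words and thus cost $O(m\log m/w)$ time, while Step~3 makes a constant number of $\Map$ and $\Lnca$ calls on inputs of length $O(m)$ and therefore costs $O(q)$. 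One window thus takes $O(kq + km\log m/w + m)$ time.

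Summing across $O(n/m)$ windows yields $O(nkq/m + nk\log m/w + n)$, and charging the $\Map$/$\Lnca$ setup a single time on top of this gives the claimed $O((nk/m)\cdot q + nk\log m/w + p + n)$. For space, the $\Map$/$\Lnca$ data structure contributes $O(s)$, and the per-window objects fit in $O(m)$ words; since windows are processed one at a time and their structures are discarded before the next one begins, the overall working space is $O(s+m)$.

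The main subtlety is justifying that $p$ and $s$ are charged only once rather than once per window. This relies on the $\Map$ interface taking the packed function $G$ as a runtime input: the data structure depends only on the packed-sequence parameters $r$ and $f=O(\log m)$ and is indifferent to the particular $N_P$, $N_{\widehat Q}$ or $D$ being queried. Hence a single preprocessing stands in for all windows, and no structure specific to $\widehat Q$ needs to persist after its window has been processed; the remainder of the argument is straightforward bookkeeping.
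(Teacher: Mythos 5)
Your proof is correct and takes essentially the same route as the paper's: window $Q$ into $O(n/m)$ overlapping substrings of length $2m-2k$, run the packed LV-algorithm of Section~\ref{sec:reduction} on each window, and add up the $O(k(q+m\log m/w)+m)$ per-window costs. Your explicit justification that $p$ and $s$ are charged only once across all windows is a detail the paper's proof leaves implicit, but it does not change the argument.
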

\begin{proof}
  We consider two cases depending on $n$. First, suppose that $n \leq
  2m-2k$. Then, all of the packed sequences in the algorithm have
  length $O(m)$. Hence, we can use the data structure for $\Map$ and
  $\Lnca$ directly to implement step $3$ in time $q$. Since steps $1$,
  $2$, and $4$ use time $O(rf/w + 1) = O(m\log m/w + 1)$, we can
  compute all of the $k+1$ state transitions in time $O(k(q +
  \frac{m\log m}{w}) + m)$. With additional time and space for
  preprocessing and using the fact that $n/m = O(1)$, the result
  follows.  If $n > 2m-2k$, we apply the algorithm to $O(n/m)$
  substrings of length $2m-2k$ as described in
  Section~\ref{sec:preliminaries}. Since the computation for each of
  the substrings is independent, we can reuse space to get $O(p + m)$
  space in total. The total time is
  \begin{equation*}
    O\left(\frac{n}{m}\cdot k \cdot \left(q + \frac{m\log m}{w}\right) + p + n\right) =  O\left(\frac{nk}{m} \cdot q + \frac{nk\log m}{w} + p + n\right).
  \end{equation*}
\end{proof}

\section{Implementing $\Lnca$ and $\Map$}\label{sec:lncamap}
In this and the following section we show how to implement the
$\Lnca$ and $\Map$ operation efficiently.

For simplicity in the description of our algorithms, we will initially
assume that our word RAM model supports a constant number of
\emph{non-standard instructions}. Specifically, in addition to the
standard constant time instructions on words, e.g., arithmetic and
bitwise logical instructions, we will allow a few special constant
time instructions (the non-standard ones) defined by us. As with
standard instructions, a non-standard instruction take $O(1)$ operand
words and return $O(1)$ result words. We will subsequently implement
the non-standard instructions using either tabulation or word-level
parallelism. These two approaches lead to the two parts of
Theorem~\ref{thm:main}. We emphasize that the main result in
Theorem~\ref{thm:main} only uses standard instructions.

To implement $\Lnca$, we will simply assume that $\Lnca$ is itself
available as a non-standard instruction. Specifically, given two $f$-packed
sequences $X$ and $Y$ of length $s = \floor{w/(f+1)}$, e.g., $X$ and
$Y$ can each be stored in a single word, we can compute $\Lnca(X,Y)$
in constant time.  Since $\Lnca$ is an elementwise
operation, we immediately have the following result for general packed
sequences. 
\begin{lemma}\label{lem:nstdlnca}
  Let $X$ and $Y$ be $f$-packed sequences of length $r$. With a
  non-standard $\Lnca$ instruction, we can compute $\Lnca(X, Y)$ in
  time $O(\frac{rf}{w} + 1)$.
\end{lemma}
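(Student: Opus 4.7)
The plan is to exploit the elementwise nature of $\Lnca$ and apply the non-standard instruction chunk-by-chunk across the word boundaries of the packed representation. Since $X$ and $Y$ use the same layout (both are $f$-packed sequences of length $r$), their fields align word by word, and the $i$th output field depends only on the $i$th fields of $X$ and $Y$. This means there is no interaction across word boundaries to worry about.

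First I would fix the number of fields that fit in a word, $s = \lfloor w/(f+1) \rfloor$, and recall from the preliminaries that $X$ and $Y$ are each stored in $\lceil r/s \rceil = O(rf/w + 1)$ machine words. I would then iterate over these words in parallel, loading the $j$th word of $X$ and the $j$th word of $Y$ and invoking the non-standard single-word $\Lnca$ instruction to produce the $j$th word of the result. Each call takes $O(1)$ time by assumption, so the total time is $O(rf/w + 1)$. The final word may contain fewer than $s$ valid fields, but since unused high-order bits are zero (or can be masked with a precomputed constant) this edge case is handled in $O(1)$ extra time.

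There is really no main obstacle here, since the lemma is essentially a bookkeeping statement: the non-standard instruction already does the hard work inside a single word, and the outer loop just exposes word-level parallelism across the packed representation. The only thing to verify carefully is that the chunking is consistent between $X$ and $Y$, so that the $i$th field of the $j$th word of $X$ really does correspond to the $i$th field of the $j$th word of $Y$; this follows immediately from the definition of the $f$-packed representation used throughout the paper. The resulting bound $O(rf/w + 1)$ matches the generic bound already established for elementwise operations in Section~\ref{sec:bittricks}.
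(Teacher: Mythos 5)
Your proof is correct and matches the paper's argument exactly: both apply the single-word non-standard $\Lnca$ instruction to corresponding words of $X$ and $Y$, yielding $O(\lceil r(f+1)/w\rceil) = O(rf/w + 1)$ time. The extra care you take about field alignment and the final partial word is fine but not needed beyond what the paper states.
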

\begin{proof}
  Using the non-standard $\Lnca$ instruction, we compute the $i$th
  word of $\Lnca(X, Y)$ in constant time from the $i$th word of $X$
  and $Y$. Since $X$ and $Y$ are stored in $O(rf/w + 1)$ words, the
  result follows.
\end{proof}
%Given another $f$-packed sequence $Y = \angle{y_1, \ldots, y_r}$ we write $X \subseteq Y$ if each field in $X$ appears in $Y$. The \emph{zip} of $X$ and $Y$, denoted $X \ddagger Y$ is the $2f$-packed sequence $\angle{(x_1, y_1), \ldots, (x_r, y_r)}$. 

The output words of the $\Map$ operation may depend on many words of
the input and a fast way to collect the needed information is
therefore required.  We achieve this with a number of auxiliary
operations. Let $X$ and $Y$ and be $f$-packed sequences of length $r$
and let $G$ be a $f$-packed function of length $u$. Define
\begin{relate}
\item[$\Zip(X,Y):$] Return the $2f$-packed sequence $X \ddagger Y$.
\item[$\Unzip(X \ddagger Y):$] Return $X$ and $Y$. This is the reverse of the $\Zip$ operation. 
\item[$\Merge(X, Y):$] For sorted $X$ and $Y$, return the sorted $f$-packed sequence of the $2r$ entries in $X$ and $Y$. 
\item[$\Sort(X):$] Return the $f$-packed sequence of the sorted entries in $X$.
\item[$\SortedMap(G, X):$] For sorted $X$, return $\Map(G, X)$.
\end{relate}
With these operations available as non-standard instructions, we obtain
the following result for general $f$-packed sequences.
\begin{lemma}\label{lem:nstdother}
  Let $X$ and $Y$ be $f$-packed sequences of length $r$ and let $G$ be
  a $f$-packed function of length $u$.  With $\Zip$, $\Unzip$,
  $\Merge$, $\Sort$ and $\SortedMap$ available as non-standard
  instructions, we can compute
\begin{itemize}
 \item[(i)] $\Zip(X, Y)$, $\Unzip(X \ddagger Y)$, and $\Merge(X)$ in time $O(\frac{rf}{w} + 1)$, 
 \item[(ii)] $\Sort(X)$ in time $O(\frac{rf}{w} \log r + 1)$, and 
 \item[(iii)] $\SortedMap(G,X)$ in time $O(\frac{(r+u)f}{w} + 1)$.
\end{itemize}
\end{lemma}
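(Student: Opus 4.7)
The plan is to implement each multi-word operation by invoking the non-standard single-word primitives on successive words of the input, combined with a small amount of pointer bookkeeping. Write $s = \Theta(w/f)$ for the number of fields per word, so an $f$-packed sequence of length $r$ occupies $O(rf/w + 1)$ words and an $f$-packed function of length $u$ occupies $O(uf/w + 1)$ words.

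For part (i), $\Zip$ and $\Unzip$ are word-local: the $k$-th output word depends only on the $k$-th input word(s). I simply apply the non-standard instruction to each corresponding word pair in $O(1)$ time, for a total of $O(rf/w + 1)$. For multi-word $\Merge$, I would sweep through the word arrays of $X$ and $Y$ with two pointers $i, j$ while maintaining a sorted buffer word $B$ holding the next $s$ entries waiting to be emitted. In each step I pick the sequence whose next unprocessed word has the smaller leading entry, call the single-word $\Merge$ on $B$ and that word, output the lower half, and promote the upper half to the new buffer; a constant number of further single-word merges per step suffice to restore the invariant in the case where the new upper half contains entries exceeding the other sequence's next head. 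Each step produces $s$ output entries, so the total cost is $O(r/s) = O(rf/w + 1)$.

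For part (ii), I would use bottom-up merge sort built on the multi-word $\Merge$ from part (i). First apply the single-word $\Sort$ instruction to each of the $O(rf/w+1)$ words of $X$ independently, producing sorted blocks of $s$ entries. Then repeatedly merge adjacent blocks, doubling the block size at each of the $O(\log r)$ levels. Each level does $O(rf/w)$ total work by part (i), giving the claimed $O((rf/w)\log r + 1)$ bound.

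For part (iii), I would walk through the words of $X$ and $G$ in parallel. Because $X$ is sorted and every entry of $X$ appears in $\dom(G)$, the keys contained in any single word $X_k$ span a contiguous range of words of $G$. For each pair $(X_k, G_\ell)$ that overlaps in this sense, I call the single-word $\SortedMap$ to compute the mapped values of the entries of $X_k$ whose keys lie in $G_\ell$; combining (via bitwise-or) the partial outputs for a fixed $X_k$ yields the corresponding word of $\Map(G,X)$. Each word $G_\ell$ is touched by at most two consecutive $X_k$'s at its boundary, so the total number of single-word invocations is $O(r/s + u/s)$, and the running time is $O((r+u)f/w + 1)$. The main obstacle is the multi-word $\Merge$ in part (i): formally proving that $O(1)$ amortized single-word merges per output word suffice to maintain the buffer invariant when the upper half of a single-word merge contains entries larger than some yet-unprocessed head. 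Parts (ii) and (iii) then follow by standard merge-sort and two-pointer arguments as essentially black-box applications of part (i) and the single-word $\SortedMap$ instruction.
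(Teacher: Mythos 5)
Your treatment of $\Zip$, $\Unzip$, parts (ii) and (iii) matches the paper's in substance: word-local application for zipping, bottom-up merge sort on top of the multi-word $\Merge$ for (ii), and an aligned sweep of $X$ against the words of $G$ for (iii). (Minor point on (iii): your claim that each word $G_\ell$ is touched by at most two words of $X$ is false --- a single wide $G_\ell$ can overlap many $X_k$ --- but the total number of overlapping pairs is still $O(r/s+u/s)$ by the usual ``staircase'' count, since each successive overlapping pair advances at least one of the two pointers; the paper organizes this as a partition of $X$ into subsequences $X_1,\ldots,X_{\lceil u/s\rceil}$ aligned with the words of $G$, found via $\rank$ queries.)

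The genuine gap is exactly where you flagged it: the multi-word $\Merge$. Your buffer scheme is not merely unproven, its basic step is incorrect. Take $s=2$, buffer $B=\angle{5,6}$, next word of $X$ equal to $\angle{1,8}$ and next word of $Y$ equal to $\angle{2,3}$. The head of $X$ is smaller, so you merge $B$ with $\angle{1,8}$ and emit the lower half $\angle{1,5}$ --- but $2$ and $3$ should precede $5$. The defect is not in the upper half (which could be repaired later) but in entries already emitted, so no after-the-fact ``restoration'' helps; and the natural repair of also merging against the other stream consumes a whole word from each side per $s$ emitted entries, so the leftover buffer grows by $s$ per round and the per-round cost is no longer $O(1)$. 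The paper avoids buffers entirely: it first zips $X$ and $Y$ with $J_{2r,f}=\angle{1,\ldots,2r}$ to impose a total order, and in each iteration extracts the next $s$ \emph{unconsumed} fields of $X$ and of $Y$ (these windows are not word-aligned, but $s$ consecutive fields span at most two words and are extracted in $O(1)$ time), applies one non-standard $\Merge$ to these $2s$ entries, emits the lower half $Z=\angle{z_1,\ldots,z_s}$, and advances the $X$- and $Y$-pointers by $\rank(X,z_s)$ and $\rank(Y,z_s)$ computed on the extracted windows. Because the windows start at the current heads of both sequences, the lower half really is the set of $s$ globally smallest unemitted entries, and the total order guarantees the rank computations skip exactly the emitted entries; each iteration emits $s$ entries in $O(1)$ time, giving $O(rf/w+1)$. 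You would need either to adopt this rank-and-skip scheme or to supply a correct buffer invariant with an amortization argument before (i), and hence (ii), can be considered proved.
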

\begin{proof}
Let $s = \floor{w/(f+1)}$ denote the number of fields in a word.

(i) We  implement $\Zip$ and $\Unzip$ one  word at the time  as in the
algorithm for  $\Lnca$. This  takes time $O(rf/w  + 1)$.  To implement
$\Merge$,  we simulate  the standard  merge algorithm.  First,  impose a
total ordering on  the entries in $X$ and $Y$  by $\Zip$'ing them with
$J_{2r,f} = \angle{1,  \ldots, 2r}$ thus increasing the  fields of $X$
and $Y$  to $2f$ bits  (if $J_{2r,f}$ is  not available, we  can always
produce any word of it constant time by determining the leftmost entry
of the word, replicating it  to all positions, and adding the constant
word $J_{s,f}  = \angle{1, \ldots,  s}$). We compute  $\Merge(X,Y)$ in
$O(r/s)$  iterations starting  with  the smallest  fields  in $X$  and
$Y$. In each iteration, we extract  the next $s$ fields of $X$ and $Y$,
$\Merge$ them using the  non-standard instruction, and concatenate the
smallest $s$  fields $Z =  \angle{z_1, \ldots, z_s}$ of  the resulting
sequence of  length $2s$  to the  output. We then  skip over  the next
$\rank(X, z_s)$  fields of $X$ and  $\rank(Y, z_s)$ fields  of $Y$ and
continue  to  the next  iteration.  The  total  ordering ensures  that
precisely  the  output   entries  in  $Z$  are  skipped   in  $X$  and
$Y$. Finally, we $\Unzip$ the $f$  rightmost bits of each field to get
the final result. To compute $\rank$  we only need to look at the next
$s$  fields of $X$  and $Y$  and hence  each iteration  takes constant
time. In total, we use time $O(rf/w + 1)$.

(ii) We simulate the merge-sort algorithm. First, sort each of word in
$X$ using the non-standard $\Sort$ instruction. This takes $O(r/s)$
time. Starting with subsequences of length $l=s$, we repeatedly merge
pairs of consecutive subsequences into sequences of length $2l$ using
(i). After $O(\log (r/s))$ levels of recursion, we are left with a
sorted sequence. Each level takes $O(r/s + 1)$ time and hence the
total time is $O(\frac{r}{s} \cdot \log \frac{r}{s}) = O(\frac{rf}{w}
\log r)$.

(iii) We implement $\SortedMap(G,X)$ as follows. Let $G_1, \ldots,
G_{\ceil{u/s}}$ be the words of $G$. We first partition $X$ into
maximum length subsequences $X_1, \ldots, X_{\ceil{u/s}}$ such that
all entries of $X_i$ appear in $\dom(G_i)$. We do so in $\ceil{u/s}$
iterations starting with the smallest field $X$. Let $g_i$ denote the
largest field in $G_i$. In iteration $i$, we repeatedly extract the
next word from $X$ and compare the largest field of the word with
$g_i$ to identify the word of $X$ containing the end of $X_i$. Let $Z
= \angle{z_1, \ldots, z_s}$ be this word. We find the end of $X_i$ in
$Z$ by computing $h = \rank(Z, g_i)$. We concatenate each of the words
extracted and the $h$ first fields of $Z$ to form $X_i$. Finally, we proceed to
the next iteration. In total, this takes $O((r+u)/s + 1)$ time.

Next, we compute for $i = 1, \ldots, \ceil{u/s}$ the $f$-packed
sequences $\SortedMap(G_i, X_i)$ by applying the non-standard
$\SortedMap$ instruction to each word in $X_i$. Since each entry in
$X_{i}$ appears in $G_{i}$ and $X_i$ is sorted, this takes constant
time for each word in $X_i$. Finally, we concatenate the resulting
sequences into the final result. The total number of words in $X_1,
\ldots, X_{\ceil{u/s}}$ is $O((r+u)/s + 1)$ and hence the total time
is also $O((r+u)/s + 1)$.
\end{proof}

With the operations from Lemma~\ref{lem:nstdother}, we can now compute $\Map(G, X)$ as the sequence $M_2$ obtained as follows. Let $J_{r,f} = \angle{1, \ldots, r}$. 
\begin{align*}
(Z_1, Z_2) &:=\Unzip(\Sort(\Zip(X,J_{r,f})) \\
A &:= \SortedMap(G, Z_1) \\
(M_1, M_2) &:= \Unzip(\Sort(\Zip(Z_2, A))) \\
\end{align*}
We claim that $M_2 = \Map(G, X)$. Since $X$ is represented in the $f$
leftmost bits of $\Zip(X,J_{r,f}) = \angle{(x_1, 1), \ldots, (x_r,
  r)}$, we have that $\Sort(\Zip(X,J_{r,f}))$ is a $2f$-packed sequence
$\angle{(x_{i_1}, i_1), \ldots, (x_{i_r}, i_r)}$ such that $x_{i_1}
\leq \cdots \leq x_{i_r}$. Therefore, $A = \SortedMap(G, Z_1) =
\angle{g(x_{i_1}), \ldots, g(x_{i_r})}$ and hence $\Zip(Z_2, A) =
\angle{(i_1, g(x_{i_1})), \ldots, (i_r, g(x_{i_r}))}$. It follows that
$\Sort(\Zip(Z_2, A)) = \angle{(1, g(x_1)), \ldots, (r, g(x_r))}$
implying that $M_2 = \Map(G, X)$.

 We obtain the following result.
\begin{lemma}\label{lem:nstdmap}
  Let $X$ be a $f$-packed sequence of length $r$ and let $G$ be a
  $f$-packed function of length $u$ such that all entries in $X$
  appear in $\dom(G)$.  With $\Zip$, $\Unzip$, $\Merge$, $\Sort$ and
  $\SortedMap$ available as non-standard word instructions, we can
  compute $\Map(G, X)$ in time $O(\frac{(r+u)f}{w} + \frac{rf}{w} \log
  r + 1)$.
\end{lemma}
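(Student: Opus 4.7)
The plan is to follow verbatim the three-line algorithm sketched immediately before the lemma statement, and then verify (a) that the returned $M_2$ equals $\Map(G,X)$ and (b) that the total time matches the claimed bound by charging each call to Lemma~\ref{lem:nstdother}.

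For correctness I would trace the data through each step. The sequence $\Zip(X,J_{r,f}) = \angle{(x_1,1),\ldots,(x_r,r)}$ attaches to every entry a unique ``position tag''. Because the $x_i$-part occupies the more significant $f$ bits of each $2f$-wide field, sorting the zipped sequence orders the fields primarily by $x_i$ (with the tag breaking ties), producing $\angle{(x_{i_1},i_1),\ldots,(x_{i_r},i_r)}$ with $x_{i_1}\le \cdots\le x_{i_r}$. Unzipping then yields a sorted $Z_1=\angle{x_{i_1},\ldots,x_{i_r}}$ whose entries all lie in $\dom(G)$ by hypothesis on $X$, so $\SortedMap$ is applicable and returns $A=\angle{g(x_{i_1}),\ldots,g(x_{i_r})}$. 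Zipping $A$ with the preserved tag sequence $Z_2$ gives $\angle{(i_1,g(x_{i_1})),\ldots,(i_r,g(x_{i_r}))}$; sorting now reverses the first permutation (the tags occupy the more significant half, and the tags $\{i_1,\ldots,i_r\}=\{1,\ldots,r\}$ are distinct), leaving $\angle{(1,g(x_1)),\ldots,(r,g(x_r))}$. The final $\Unzip$ extracts $M_2=\angle{g(x_1),\ldots,g(x_r)}=\Map(G,X)$ as required.

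For the running time I would simply sum the costs from Lemma~\ref{lem:nstdother}. Both $\Zip$ calls operate on $f$-packed sequences of length $r$ and the two $\Unzip$ calls on $2f$-packed sequences of the same length, so each is $O(rf/w+1)$ by part~(i). Each of the two $\Sort$ calls is applied to a $2f$-packed sequence of length $r$, which occupies $O(rf/w)$ words and hence costs $O(\frac{rf}{w}\log r + 1)$ by part~(ii). The single $\SortedMap(G,Z_1)$ call, where $|G|=u$ and $|Z_1|=r$, costs $O(\frac{(r+u)f}{w}+1)$ by part~(iii). Adding these five contributions gives $O(\frac{(r+u)f}{w}+\frac{rf}{w}\log r + 1)$, matching the claim.

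I do not expect any real obstacle. The only minor care points are (i) confirming that the lexicographic order produced by the non-standard $\Sort$ on the zipped fields matches the order on the high $f$ bits with the low $f$ bits as tiebreaker, and (ii) accessing the constant sequence $J_{r,f}$, which is handled on the fly as in the proof of Lemma~\ref{lem:nstdother}(i). Both are routine bookkeeping.
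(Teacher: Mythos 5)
Your proposal is correct and follows the same route as the paper: the correctness trace of the three-line algorithm (sort by the high-order $x_i$ bits with tags as tiebreakers, apply $\SortedMap$, re-sort by the tags to undo the permutation) is exactly the argument the paper gives just before the lemma, and the timing analysis by summing the costs of the $\Zip$, $\Unzip$, $\Sort$, and $\SortedMap$ calls via Lemma~\ref{lem:nstdother} (including generating $J_{r,f}$ on the fly) is the paper's proof verbatim in substance.
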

\begin{proof}
  The above algorithm requires $2$ $\Sort$, $\Zip$, and $\Unzip$
  operations on packed sequences of length $r$ and a $\SortedMap$
  operation on a packed function of length $u$ and a packed sequence
  of length $r$. By Lemma~\ref{lem:nstdother} and the observation from
  the proof of Lemma~\ref{lem:nstdother}(i) that we can compute
  $J_{r,f}$ in constant time per word, we compute $\Map(G,X)$ in time
  $O(\frac{(r+u)f}{w} + \frac{rf}{w} \log r + 1)$.
\end{proof}

%Combining the result for $\Lnca$ and $\Map$ we get the following result for approximate string matching.

%\begin{lemma}\label{lem:nstdapproxstring}
%Let $P$ and $Q$ be strings of length $n$ and $m$, respectively, and let $k$ be an error threshold. With $\Lnca$, $\Zip$, $\Unzip$, $\Merge$, $\Sort$ and $\SortedMap$ available as a non-standard instructions we can compute can solve approximate string matching in time $O(\frac{nk\log^2 m}{w} + n)$ and space $O(m)$.
%\end{lemma}
%\begin{proof}
%We combine Lemma~\ref{lem:nstdlnca} and~\ref{lem:nstdmap} for $\Lnca$ and $\Map$ with $r, u = O(m)$ and $f = O(\log m)$ and the reduction in Lemma~\ref{lem:redux}. No preprocessing time is needed and space is $s = O(mf/w + 1) = O(m)$ for storing temporary packed sequences. The time to compute an $\Lnca$ or $\Map$ operation is dominated by $\Map$ which uses time 
%\begin{equation*}
%q = O\left(\frac{(r+u)f}{w} + \frac{rf}{w} \log r + 1\right) = O\left(\frac{m\log^2 m}{w} + 1\right)
%\end{equation*}
%and hence we can solve approximate string matching in time 
%\begin{equation*}
%O\left(\frac{nk}{m} \cdot q + \frac{nk\log m}{w} + p + n\right) = O\left(\frac{nk}{m} \cdot \frac{m\log^2 m}{w} + \frac{nk\log m}{w} + n\right) = O\left(\frac{nk\log^2 m}{w} + n\right)
%\end{equation*}
%and space $O(s + m) = O(m)$.
%\end{proof}

By a standard tabulation of the non-standard instructions, we obtain algorithms for $\Lnca$ and $\Map$ which in turn provides us with Theorem~\ref{thm:main}(i). 
\begin{theorem}\label{thm:tabulation}
Approximate string matching for strings $P$ and $Q$ of length $m$ and $n$, respectively, with error threshold $k$, can be solved in time $O(nk\cdot\frac{\log^2 m}{\log n} + n)$ and space $O(n^\epsilon + m)$, for any constant $\epsilon > 0$. 
\end{theorem}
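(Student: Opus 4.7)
The plan is to apply Lemma~\ref{lem:redux} after instantiating the non-standard instructions required by Lemmas~\ref{lem:nstdlnca} and~\ref{lem:nstdmap} via tabulation. Concretely, I fix a simulated word size $w' = (\epsilon/c)\log n$ for a sufficiently large constant $c$, regard $f$-packed sequences as being split across words of $w'$ bits instead of $w$ bits, and tabulate each non-standard instruction ($\Lnca$, $\Zip$, $\Unzip$, $\Merge$, $\Sort$, $\SortedMap$) by a precomputed lookup table indexed by the $O(1)$ input words of $w'$ bits. Since every table has $2^{O(w')} = n^{O(\epsilon/c)}$ entries, choosing $c$ appropriately makes the combined table size and preprocessing $O(n^\epsilon)$.

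Next I would plug $w'$ back into the bounds of Lemma~\ref{lem:nstdlnca} and Lemma~\ref{lem:nstdmap} applied to inputs of length $r = O(m)$ and a packed function of length $u = O(m)$, noting that $f = O(\log m)$ and $f \geq \log r$ are already guaranteed by the preprocessing in Section~\ref{sec:reduction}. This gives $\Lnca$ in time $O(m\log m/\log n + 1)$ and $\Map$ in time
\begin{equation*}
O\!\left(\tfrac{(r+u)f}{w'} + \tfrac{rf}{w'}\log r + 1\right) = O\!\left(\tfrac{m\log^{2} m}{\log n} + 1\right),
\end{equation*}
so Lemma~\ref{lem:redux} applies with $q = O(m\log^{2} m/\log n + 1)$, $s = O(n^{\epsilon}+m)$, and $p = O(n^{\epsilon}+m)$.

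Finally I would substitute these into the complexity of Lemma~\ref{lem:redux}:
\begin{equation*}
O\!\left(\tfrac{nk}{m}\cdot q + \tfrac{nk\log m}{w} + p + n\right) = O\!\left(nk\cdot\tfrac{\log^{2} m}{\log n} + \tfrac{nk\log m}{w} + n^{\epsilon} + n\right).
\end{equation*}
Since $w \geq \log n$ and $\log m \leq \log^{2} m$, the middle term is absorbed into $nk\cdot \log^{2} m/\log n$, and $n^{\epsilon}$ is absorbed into $n$, yielding the claimed time bound $O(nk\cdot \log^{2}m/\log n + n)$. The space bound $O(n^{\epsilon}+m)$ follows directly from $s + m$.

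The main obstacle is making the tabulation honest: each non-standard instruction must operate on $O(1)$ words of width $w'$ and its truth table must fit in $O(n^\epsilon)$ bits. This forces $w'$ to be a small multiple of $\log n$ (the constant $c$ above), which in turn determines the factor $\log^{2} m / \log n$ in the running time through the number of fields per simulated word, $s' = \Theta(w'/f) = \Theta(\log n/\log m)$. Once this scaling is fixed, the rest is purely mechanical substitution into the two lemmas.
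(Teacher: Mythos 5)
Your proposal is correct and follows essentially the same route as the paper's proof: restrict the packed representation to simulated words of $\Theta(\epsilon\log n)$ bits, tabulate the non-standard instructions in $O(n^{\epsilon})$ space and preprocessing, and then invoke Lemmas~\ref{lem:nstdlnca}, \ref{lem:nstdmap}, and~\ref{lem:redux} with effective word size $\Theta(\log n)$. The only cosmetic difference is that you make explicit the absorption of the $nk\log m/w$ term and the $n^{\epsilon}$ additive term, which the paper leaves implicit.
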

\begin{proof}
  Modify the $f$-packed sequence representation to only fill up the $b
  = \delta \log n$ leftmost bits of each words, for some constant
  $\delta > 0$. Implement the standard operations in all our packed
  sequence algorithms as before and for the non-standard instructions
  $\Lnca$, $\Zip$, $\Unzip$, $\Sort$, $\Merge$, and $\SortedMap$
  construct lookup tables indexed by the inputs to the operation and
  storing the corresponding output. Each of the $2^{O(b)}$ entries the
  lookup tables stores $O(b) = O(w)$ bits and therefore the space for
  the tables is $2^{O(b)} = n^{O(\delta)}$. It is straightforward to
  compute each entry in time polynomial in $b$ and therefore the total
  preprocessing time is also $2^{O(b)} b^{O(1)} = n^{O(\delta)}$. For
  any constant $\epsilon > 0$, we can choose $\delta$ such that the
  total preprocessing time and space is $O(n^\epsilon)$.

  We can now implement $\Lnca$ and $\Map$ according to
  Lemma~\ref{lem:nstdlnca} and \ref{lem:nstdmap} with $w = b = O(\log
  n)$ without the need for non-standard instruction in time
  $O(\frac{rf}{\log n} + 1)$ and $O(\frac{(r+u)f}{\log n} +
  \frac{rf}{\log n} \log r + 1)$, respectively. We plug this into the
  reduction of Lemma~\ref{lem:redux}. We have that $r,u = O(m)$ and $f
  = O(\log m)$ and therefore $q = O(\frac{(r+u)f}{\log n} +
  \frac{rf}{\log n} \log r + 1) = O(\frac{m\log^2 m}{\log n} +
  1)$. Since $s = p = O(n^\epsilon)$, we obtain an algorithm for
  approximate string matching using space $O(n^\epsilon + m)$ and time
  $O(\frac{nk}{m} \cdot \frac{m\log^2 m}{\log n} + n) = O(nk \cdot
  \frac{\log^2 m}{\log n} + n)$.
\end{proof}

\section{Exploiting Word-Level Parallelism}\label{sec:word}
For part (ii) of Theorem~\ref{thm:main} we implement each of the
non-standard instructions $\Zip$, $\Unzip$, $\Sort$, $\Merge$,
$\SortedMap$, and $\Lnca$ using only the standard arithmetic and
bitwise instruction of the word RAM. This allows us to take full
advantage of long word lengths. Furthermore, this also gives us a more
space-efficient algorithm than the one above since no lookup tables
are needed. In the following sections, we present algorithms for each
of the non-standard instructions and use these to derive efficient
algorithms for the $f$-packed sequence operations. The results for
$\Zip$, $\Unzip$ and $\Merge$ are well-known and the result for
$\Sort$ is a simple extension of $\Merge$. The results for
$\SortedMap$ and $\Lnca$ are new. Throughout this section, let $s =
\floor{w/(f+1)}$ denote the number of fields in a word, and assume
without loss of generality that $s$ is a power of $2$.

\subsection{Zipping and Unzipping}
We present an $O(\log s)$ algorithm for the $\Zip$ instruction based on
the following recursive algorithm. Let $X = \angle{x_1, \ldots, x_s}$
and $Y = \angle{y_1, \ldots, y_s}$ be $f$-packed sequences. If $s=1$
return $x_1 \cdot y_1$. Otherwise, recursively  compute the packed
sequence
\begin{equation*}
\left(\angle{x_{s/2 + 1}, \ldots, x_s} \ddagger  \angle{y_{s/2 + 1}, \ldots, y_s}\right) \cdot  \left(\angle{x_1, \ldots, x_{s/2}} \ddagger  \angle{y_1, \ldots, y_{s/2}}\right) .
\end{equation*}
It is straightforward to verify that the returned sequence is $X
\ddagger Y$. We implement each level of the recursion in parallel. Let
$Z = Y \cdot X = \angle{x_1, \ldots, x_s, y_1, \ldots, y_s}$. The
algorithm works in $\log s$ steps, where each step corresponds to a
recursion level. At step $i$, $i = 1, \ldots, \log s$, $Z$ consists of
$2^i$ subsequences of length $2^{\log s - i+1}$ stored in consecutive
fields. To compute the packed sequence representing level $i+1$, we
extract the middle $2^{\log s - i}$ fields of each of the $2^i$
subsequences and swap their leftmost and rightmost halves. Each step
takes $O(1)$ time and hence the algorithm uses time $O(\log s)$. To
implement $\Unzip$, simply we carry out the steps in reverse.

This leads to the following result for general $f$-packed sequences.
\begin{lemma}\label{lem:packedzip}
For $f$-packed sequences $X$ and $Y$ of length $r$ we can compute $\Zip(X,Y)$ and $\Unzip(X \ddagger Y)$ in time $O(\frac{rf}{w}\log w + 1)$.
\end{lemma}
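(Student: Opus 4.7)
The plan is to reduce the multi-word case to the single-word $\Zip$ and $\Unzip$ procedures established just above. For $\Zip$, I would first partition $X$ into its constituent words $X^{(1)},\ldots,X^{(t)}$, where $t = \lceil r(f+1)/w\rceil = O(rf/w + 1)$, and partition $Y$ analogously. For each $i$, $X^{(i)}$ and $Y^{(i)}$ each hold at most $s$ fields of width $f+1$, so feeding the pair to the single-word $\Zip$ algorithm produces the corresponding slice of $X \ddagger Y$ (a $2f$-packed sequence of the same number of entries, occupying at most two words) in $O(\log s)$ time. Writing the slices into the appropriate positions assembles the full output.

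For $\Unzip$, I would run the pipeline in reverse: successively carve off the next (up to) $s$ fields of $X \ddagger Y$, feed this slice to the reverse of the single-word zip routine, and append the resulting pieces to the growing outputs $X$ and $Y$. Each invocation again costs $O(\log s)$ time, and there are again $O(rf/w + 1)$ slices to process.

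The main obstacle, and really the only nontrivial detail, is the alignment bookkeeping. Input fields have width $f+1$ while zipped fields have width $2f+1$, so the slices produced by distinct invocations of the single-word routine do not in general begin on word boundaries, and for $\Unzip$ the input slices may themselves straddle word boundaries. Each misalignment is corrected with a constant number of shifts, masks, and $\mid$ operations using the boundary positions computed once up front from $r$, $f$, and $w$, so it adds only a constant factor per slice. Summing the $O(\log s) = O(\log w)$ cost of each invocation over the $O(rf/w + 1)$ slices then gives the claimed $O(\frac{rf}{w}\log w + 1)$ bound for both operations.
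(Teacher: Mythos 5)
Your proposal is correct and matches the paper's proof: both reduce the multi-word case to $O(rf/w+1)$ invocations of the $O(\log s)$ single-word \Zip/\Unzip\ routine, exactly as in the word-by-word scheme of Lemma~\ref{lem:nstdother}(i), giving $O(\frac{rf}{w}\log w + 1)$ total. Your explicit treatment of the $f+1$ versus $2f+1$ field-width alignment is a detail the paper leaves implicit, but it does not change the argument.
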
  
\begin{proof}
Apply the algorithm from the proof of Lemma~\ref{lem:nstdother}(i)
using the $O(\log s)$ implementation of the non-standard $\Zip$ and
$\Unzip$ instructions. The time is $O(r \log s/s + 1) = O(rf\log w/w +
1)$.
\end{proof}

\subsection{Merging and Sorting}
We review an $O(\log s)$ algorithm for the $\Merge$ instruction due to
Albers and Hagerup~\cite{AH1997} and subsequently extend it to an
$O(\log^2 s)$ algorithm for the $\Sort$ instruction. Both results are
based on a fast implementation of \emph{bitonic sorting}, which we
review first.

\subsubsection{Bitonic Sorting} 
A $f$-packed sequence $Z = \angle{z_1, \ldots, z_s}$ is \emph{bitonic}
if 1) for some $i$, $1 \leq i \leq s$, $z_1, \ldots, z_i$ is a
non-decreasing sequence and $z_{i+1}, \ldots, z_s$ is a non-increasing
sequence, or 2) there is a cyclic shift of $Z$ such that 1)
holds. Batcher~\cite{Batcher1968} gave the following recursive
algorithm to sort a bitonic sequence.  Let $Z = \angle{z_1, \ldots,
  z_s}$ be a $f$-packed bitonic sequence. If $s = 1$ we are
done. Otherwise, compute and recursively sort the sequences
\begin{align*}
 Z_{\min} &= \min(z_1, z_{1+s/2}), \min(z_2, z_{2 + s/2}), \ldots, \min(z_{s/2}, z_s) \\
 Z_{\max} &= \max(z_1, z_{1+s/2}), \max(z_2, z_{2+s/2}), \ldots, \max(z_{s/2}, z_s)
\end{align*}
and return $Z_{\max} \cdot Z_{\min}$. For a proof of correctness, see
e.g. \cite[chap. 27]{CLRS2001}. Note that it suffices to show that
$X_{\min}$ and $X_{\max}$ are bitonic sequences and that all values in
$X_{\min}$ are smaller than all values in $X_{\max}$.

Albers and Hagerup~\cite{AH1997} gave an $O(\log s)$ algorithm using
an idea similar to the above algorithm for $\Zip$. The algorithm works
in $\log s+1$ steps, where each step corresponds to a recursion
level. At step $i$, $i=0, \ldots, \log s$, $Z$ consists of $2^i$
bitonic sequences of length $2^{\log s - i}$ stored in consecutive
fields. To compute the packed sequence representing level $i+1$, we
extract the leftmost and rightmost halves of each of $2^i$ bitonic
sequences, compute their elementwise minimum and maximum, and
concatenate the results. Each step takes $O(1)$ time and hence the
algorithm uses time $O(\log s)$.

\subsubsection{Merging}\label{sec:merging}
Let $X = \angle{x_1, \ldots, x_s}$ and $Y = \angle{y_1, \ldots, y_s}$
be sorted $f$-packed sequence. To implement $\Merge(X,
Y)$, we compute the reverse of $Y$, denoted by $Y^R = \angle{y_s,
  \ldots, y_1}$, and then apply the bitonic sorting algorithm to $Y^R
\cdot X$. Since $X$ and $Y$ are sorted, the sequence $X \cdot Y^R$ is
bitonic and hence the algorithm returns the sorted sequence of the
entries from $X$ and $Y$. Given $Y$, it is straightforward to compute
$Y^R$ in $O(\log s)$ time using the property that $Y^R = \angle{y_{1+
    s/2}, \ldots, y_{s}}^R \cdot \angle{y_1, \ldots, y_s/2}^R$ and a
parallel recursive algorithm similar to the algorithms for $\Zip$ and
$\Merge$. Hence, the algorithm for $\Merge$ uses $O(\log s)$ time.

This leads to the following result for general $f$-packed sequences.
\begin{lemma}[Albers and Hagerup~\cite{AH1997}]\label{lem:packedmerge}
For $f$-packed sequences $X$ and $Y$ of length $r$, we can compute $\Merge(X,Y)$ in time $O(\frac{rf}{w}\log w + 1)$.
\end{lemma}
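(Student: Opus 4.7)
The plan is to slot the just-established single-word bitonic $\Merge$ algorithm into the multi-word merging framework from Lemma~\ref{lem:nstdother}(i) and rebalance the costs accordingly. Recall how that framework proceeded: we first attach a distinct rank to every entry by $\Zip$'ing $X$ and $Y$ with the index sequence $J_{2r,f}$ (to break ties consistently between the two inputs, turning them into $2f$-packed sequences of length $r$), then simulate standard two-way merge in $O(r/s)$ iterations. In each iteration we extract the next unprocessed $s$ fields of $X$ and $Y$, feed them to a single-word $\Merge$, emit the smallest $s$ fields of the length-$2s$ sorted result, and use two single-word $\rank$ queries to advance the $X$- and $Y$-pointers by exactly the number of entries consumed from each side. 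A closing $\Unzip$ strips away the tie-breaking index bits.

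The only ingredient that changes is the cost of the primitives. The single-word $\Merge$ now costs $O(\log s)$ by the bitonic-sort based algorithm of Section~\ref{sec:merging} rather than $O(1)$. The initial $\Zip$ and final $\Unzip$ are no longer available as free non-standard instructions either, but Lemma~\ref{lem:packedzip} implements them on the full length-$r$ sequences in $O(\tfrac{rf}{w}\log w+1)$ time. Everything else — field extraction by shifts and masks, building the replicated index word, and $\rank$ on a single word, which was shown in Section~\ref{sec:bittricks} to use only standard word-RAM operations — remains $O(1)$ per iteration.

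Summing gives $O(\tfrac{rf}{w}\log w + 1)$ for the boundary $\Zip$/$\Unzip$ and $O((r/s)\log s) = O(\tfrac{rf}{w}\log w)$ for the $O(r/s)$ merging iterations, using $s = \Theta(w/f)$ and hence $\log s = O(\log w)$. Together this yields the claimed $O(\tfrac{rf}{w}\log w + 1)$ bound.

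I do not expect a real obstacle here: the bookkeeping of the merge simulation — in particular the correctness of using $\rank$ against the largest emitted entry to decide how far to advance each pointer, and the safety of the tie-breaking $\Zip$ with $J_{2r,f}$ — was already verified in the proof of Lemma~\ref{lem:nstdother}(i), and this proof only has to re-cost it. The one thing worth double-checking is that the per-iteration cost of $O(\log w)$ does not get multiplied by a further $\log w$ by hidden uses of $\Zip$/$\Unzip$ inside an iteration; inspecting the framework confirms that within an iteration we only manipulate a constant number of single words with arithmetic, shifts, masks, and one call to the single-word $\Merge$, so no such blow-up occurs.
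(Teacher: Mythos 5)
Your proposal is correct and matches the paper's argument: the paper likewise proves this by rerunning the merge simulation from Lemma~\ref{lem:nstdother}(i) with the $O(\log s)$ bitonic single-word \Merge{} substituted for the non-standard instruction, giving $O((r/s)\log s + 1) = O(\frac{rf}{w}\log w + 1)$. Your extra care about the boundary \Zip/\Unzip{} costs and the per-iteration constant-time bookkeeping is consistent with, and slightly more explicit than, the paper's two-line proof.
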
  
\begin{proof}
  Apply the algorithm from the proof of Lemma~\ref{lem:nstdother}(i)
  using the $O(\log s)$ implementation of the $\Merge$
  instruction. The time is $O(r \log s/s + 1) = O(rf\log w/w + 1)$.
\end{proof}

\subsubsection{Sorting}
Let $X = \angle{x_1, \ldots, x_s}$ be a $f$-packed sequence. We give
an $O(\log^2 s)$ algorithm for $\Sort(X)$. Starting from subsequences
of length $1$, we repeatedly merge subsequences until we have a single
sorted sequence. The algorithm works in $\log s + 1$ steps. At step
$i$, $i=\log s, \ldots, 0$, $X$ consists of $2^i$ sorted sequences of
length $2^{\log s - i}$ stored in consecutive fields. Note that the
steps here are numbered in decreasing order. To compute the packed
sequence representing level $i-1$, we merge pairs of adjacent
sequences by reversing the rightmost one of each pair and sorting the
pair with a bitonic sort. At level $i$, the reverse and bitonic sort takes
$O(\log i)$ time using the algorithms described above. Hence, the
algorithm for $\Sort(X)$ uses time $O(\sum_{i=0}^{\log s} \log i) =
O(\log^2 s)$.

This leads to the following result for general $f$-packed sequences.
\begin{lemma}\label{lem:packedsort}
For a $f$-packed sequence $X$ of length $r$, we can compute $\Sort(X)$ in time $O(\frac{rf}{w}\log r \log w + 1)$.
\end{lemma}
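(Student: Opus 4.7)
The plan is to mirror the merge-sort simulation used in Lemma~\ref{lem:nstdother}(ii), but to replace the two non-standard primitives by the standard-instruction algorithms developed earlier in this section. Concretely, I would first apply the $O(\log^2 s)$ single-word $\Sort$ algorithm (from the paragraph just before the lemma) to each of the $\lceil r/s\rceil$ words of $X$ independently. This produces $\lceil r/s\rceil$ sorted length-$s$ runs stored in consecutive words, and costs $O((r/s)\log^2 s)$ time in total.

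Next I would do $O(\log(r/s))$ rounds of pairwise merging, where round $k$ (for $k=0,1,\ldots$) merges adjacent sorted runs of length $s\cdot 2^k$ into runs of length $s\cdot 2^{k+1}$. Each such merge of two length-$(s\cdot 2^k)$ runs is carried out using Lemma~\ref{lem:packedmerge}, costing $O(2^k\log w+1)$ time. Since round $k$ contains $r/(s\cdot 2^{k+1})$ pairs, the work per round telescopes to $O((r/s)\log s + r/(s\cdot 2^{k+1}))$, and summing over all $O(\log(r/s))$ rounds gives a total merging cost of $O((r/s)\log s\cdot\log(r/s)+r/s)=O(\tfrac{rf}{w}\log w\log r)$.

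Adding the initial within-word sorting cost $O((r/s)\log^2 s)=O(\tfrac{rf}{w}\log^2 w)$, the total is $O\!\left(\tfrac{rf}{w}\log^2 w+\tfrac{rf}{w}\log r\log w+1\right)$. In the nontrivial case $r\ge s$ we have $\log w=\log s\le \log r$, so $\log^2 w=O(\log r\log w)$ and both terms collapse into $O(\tfrac{rf}{w}\log r\log w)$; when $r<s$ the entire sequence fits in a single word and the single-word $\Sort$ alone finishes in $O(\log^2 s)=O(1+\log r\log w)$, which is subsumed by the $+1$ in the bound. Combining the cases yields the claimed $O(\tfrac{rf}{w}\log r\log w+1)$.

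The analysis is essentially routine once the two primitives are in hand; the only point requiring mild care is the accounting at each merge level, namely observing that although a round-$k$ merge scales linearly with the length $2^k$ of the runs, the number of pairs shrinks by the same factor, so each round contributes only $O((r/s)\log s)$ to the running time rather than a geometric blow-up. This is the step I would write out most explicitly, since an incautious bookkeeping would otherwise give an extra factor of $r/s$.
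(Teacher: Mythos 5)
Your proposal matches the paper's proof: both simulate merge-sort by first running the $O(\log^2 s)$ in-word bitonic \Sort{} on each of the $O(r/s)$ words and then performing $O(\log(r/s))$ levels of pairwise merges via the packed \Merge{} of Lemma~\ref{lem:packedmerge}, with the same cost accounting per level. Your per-round bookkeeping is actually slightly more explicit than the paper's (which charges $O(\frac{rf}{w}\log w+1)$ to each whole level at once); just note that the justification should read $\log s\le\min(\log r,\log w)$ rather than $\log w=\log s$, which is all that is needed to absorb the $\frac{rf}{w}\log^2 s$ term.
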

\begin{proof}
  We implement the merge-sort algorithm as in the proof of
  Lemma~\ref{lem:nstdother}(ii). Sorting all words takes time
  $O(\frac{r}{s} \log^2 s + 1) = O(\frac{rf}{w} \log^2 w + 1)$ with
  the $O(\log^2 s)$ implementation of the $\Sort$ instruction. Each of
  the $O(\log (r/s)) = O(\log r)$ $\Merge$ steps takes time
  $O(\frac{rf}{w}\log w + 1)$ by Lemma~\ref{lem:packedmerge}. In
  total, $\Sort(X)$ takes time $O(\frac{rf}{w}\log^2 w +
  \frac{rf}{w}\log r \log w + 1) = O(\frac{rf}{w}\log r \log w + 1)$.
\end{proof}

\subsection{Mapping}
We present an $O(\log s)$ algorithm for the $\SortedMap$
instruction. Our algorithm uses a fast algorithm to \emph{compact}
packed sequences by Andersson et al.~\cite{AHNR1998}, which we review
first.

\subsubsection{Compacting}
Let $X = \angle{x_1, \ldots, x_s}$ be a $f$-packed sequence. We
consider field $i$ with test bit $t_i$ in $X$ to be \emph{vacant} if
$t_i = 1$ and \emph{occupied} otherwise. If $X$ contains $l$ occupied
fields, the \emph{compact} operation on $X$ returns a $f$-packed
sequence $C$ consisting of the occupied fields of $X$ tightly packed
in the $l$ rightmost fields of $A$ and in the same order as they appear in
$X$. Andersson et al.~\cite[Lemma 6.4]{AHNR1998} gave an $O(\log s)$
algorithm to compact $X$. The algorithm first extracts the test bits
and computes their prefix sum in a $f$-packed sequence $P$. Thus, 
$P\angle{i}$ contains the number of fields $X\angle{i}$ needs to be
shifted to the right in the final result. Note that the number of
vacant positions in $P$ can be up to $s$ and hence we need $f \geq
\log s$. We then move the occupied fields in $X$ to their correct
position in $\log s+1$ steps. At step $i$, $i=0, \ldots, \log s$, 
extract all occupied ﬁelds $j$ from $X$ such that bit $i$ of
$P\angle{j}$ is $1$. Move these fields $2^i$ position to the right and
insert them back into $X$.

The algorithm moves the occupied fields their correct position
assuming that no fields ``collide'' during the movement. For a proof
of this fact, see~\cite[Section 3.4.3]{Leighton1992}. Each step of the
movement takes constant time and hence the total running time is
$O(\log s)$. Thus, we have the following result.

\begin{lemma}[Andersson et al.~\cite{AHNR1998}]\label{lem:packedcompact}
  We can compact a $f$-packed sequence of length $s$ stored in $O(1)$
  words in time $O(\log s)$.
\end{lemma}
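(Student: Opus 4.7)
The plan is to formalize the algorithm sketched in the paragraph immediately preceding the statement. It splits into two phases: a preparation phase producing a packed sequence $P$ of target shift amounts, and a movement phase that realizes each shift through its binary decomposition.

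For the preparation phase I would first isolate the test bits of $X$ with a single $\&$ against $I_{s,f}$, then compute an $f$-packed sequence $P$ where $P\langle i\rangle$ equals the number of vacant fields at positions $1,\ldots, i-1$, which is exactly the shift distance (measured in fields) that the occupied field at position $i$ must travel to reach its final compacted slot. This exclusive prefix sum of $s$ test bits, reinterpreted as $f$-bit values, can be carried out in $O(\log s)$ time by the standard doubling trick: at iteration $j = 0,1,\ldots, \log s - 1$ add to the current packed sequence a copy of itself shifted right by $2^{j}(f+1)$ bits, using the test-bit-protected addition from Section~\ref{sec:bittricks} so that carries cannot escape their fields. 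The assumption $f \geq \log s$ ensures that no field overflows during these additions.

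For the movement phase I would iterate $i = 0, 1,\ldots, \log s$. At step $i$, broadcast bit $i$ of each field of $P$ across its entire field to obtain a mask $B_i$ (a single shift followed by the field-replication trick used in the extract operation of Section~\ref{sec:bittricks}), intersect $B_i$ with the occupied-field indicator, and thereby select precisely those occupied fields of $X$ whose remaining shift has bit $i$ set. Extract these fields, shift the extracted word right by $2^{i}(f+1)$ bit positions, and reinsert by $\mid$-ing the shifted copy into a version of $X$ in which the same source fields have been cleared. Each such step uses only constant-time bitwise primitives.

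The main obstacle is correctness: at each step one must verify that the fields being moved neither collide with one another nor with fields that remain stationary. The relevant invariant — which I would cite from \cite[Section 3.4.3]{Leighton1992} rather than rederive — is that since the shifts are realized in order of increasing power of two and the original relative order of occupied fields is preserved throughout, the destinations chosen in round $i$ are always vacant at the moment of movement. Combining the $O(\log s)$ cost of the prefix sum with $O(\log s)$ movement rounds of constant cost each yields the claimed $O(\log s)$ total running time.
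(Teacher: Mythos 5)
Your proposal matches the paper's proof essentially step for step: both compute the shift distance of each occupied field as a prefix sum of the test bits (requiring $f \geq \log s$), realize the shifts via their binary decomposition in $\log s + 1$ constant-time rounds, and cite Leighton (Section 3.4.3) for the fact that no collisions occur; the paper itself simply attributes the whole construction to Andersson et al.\ (Lemma 6.4). The only quibble is the direction of the word shift in your doubling-based prefix sum (with fields numbered right to left, accumulating the test bits of \emph{lower-indexed} fields into position $i$ requires shifting left, not right), which does not affect the substance of the argument.
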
  

\subsubsection{Mapping}
Let $X = \angle{x_1, \ldots, x_s}$ be a sorted $f$-packed sequence and
let $G = \angle{(z_1, g(z_{1})), \ldots, (z_s, g(z_{1}))}$ be a
$f$-packed function representing a function such that all entries in
$X$ appear in $\dom(G)$. We compute $\SortedMap(G,X)$ in $4$ steps:

\paragraph{Step 1: Merge Sequences} First, construct $2f+1$-packed
sequences $\widehat{X} = \angle{(x_1, 0, 0), \ldots, (x_s, 0, 0)}$ and
$\widehat{G} = \angle{(z_1, 1, g(z_1)), \ldots, (z_s, 1, g(z_s)}$ with
two zips. The $1$-bit subfield in the middle, called the \emph{origin
  bit}, is $0$ for $\widehat{X}$ and $1$ for $\widehat{G}$.

Compute $M = \Merge(\widehat{G}, \widehat{X})$. Since entries from $X$
and $\dom(G)$ appear in the rightmost $f$-bits of the fields in
$\widehat{G}$ and $\widehat{X}$, identical values from $X$ and
$\dom(G)$ are grouped together in $M$. We call each such a group a
\emph{chain}. Since the entries in $\dom(G)$ are unique and all
entries in $X$ appears in $\dom(G)$, each chain contains one entry
from $\widehat{G}$ followed by $0$ or more entries from
$\widehat{X}$. Furthermore, since the origin bit is $1$ for entries
from $\widehat{G}$ and $0$ from $\widehat{X}$, each chain starts with
a field from $\widehat{G}$. Thus, $M$ is the concatenation of
$|\dom(F)| = s$ chains:
\begin{equation*}
  M = C_1 \cdots C_s 
  = \angle{(z_1, 1, g(z_1)), \underbrace{(z_1, 0, 0), \ldots, (z_1,0,0)}_{\text{$0$ or more}} } \cdots \angle{(z_s, 1, g(z_s)), \underbrace{(z_s, 0, 0), \ldots, (z_s,0,0)}_{\text{$0$ or more}}}. 
\end{equation*}
All operations in step $1$ takes $O(1)$ time except for $\Merge$ that
takes $O(\log s)$ time using the algorithm from
Section~\ref{sec:merging}.

Consider a chain $C = \angle{(z_{j}, 1, f(z_{j})), (z_{j}, 0, 0),
  \ldots, (z_{j},0,0)}$ in $M$ with $p$ fields. Each of the $p-1$
fields $(z_{j}, 0, 0), \ldots, (z_{j},0,0)$ correspond to $p-1$
identical fields from $X$, and should therefore be replaced by $p-1$
copies of $f(z_j)$ in the final result (note that for $p=1$, $z_j \not
\in X$ and therefore $f(z_j)$ is not present in the final result). The
following $3$ steps convert $C$ to $p-1$ copies of $f(z_j)$ as
follows. Step $2$ removes the leftmost field of $C$. If $p=1$, $C =
\angle{(z_j, 1, f(z_j))}$ is completely removed and does not
participate further in the computation. Otherwise, we are left with
$C=\angle{(z_{j}, 1, f(z_{j})), (z_{j}, 0, 0), \ldots, (z_{j},0,0)}$
with $p-1 > 0$ fields. Step $3$ computes the chain lengths and
replaces $C$ with $\angle{(p-1, 1, f(z_j)}$. Finally, step $4$
converts this to $p-1$ copies of $f(z_j)$.
  
\paragraph{Step 2: Reduce Chains}
Extract the origin bits from $M$ into a sequence $O$. Shift $O$ to the
right to set all entries to right of the start of each chain to be
vacant and then compact.  The resulting sequence $M^1$ is a
subsequence of $l$ reduced chains $C_{i_1}, \ldots, C_{i_l}$ from $C_1
\cdots C_r$. Note that $l$ is the number of chains of length $>1$ in
$M$ and therefore the number of unique entries in $X$. Hence,
\begin{equation*}
  M^1 = C_{i_1} \cdots C_{i_l} 
  = \angle{(z_{i_1}, 1, f(z_{i_1})), \underbrace{(z_{i_1}, 0, 0), \ldots, (z_1,0,0)}_{\text{$0$ or more}} } \cdots \angle{(z_{i_l}, 1, f(z_{i_l})), \underbrace{(z_{i_l}, 0, 0), \ldots, (z_{i_l},0,0)}_{\text{$0$ or more}}}.   
\end{equation*}
All operations in step $2$ takes $O(1)$ time except for the compact
operation that takes $O(\log s)$ time by
Lemma~\ref{lem:packedcompact}.

\paragraph{Step 3: Compute Chain Lengths} Replace the rightmost
subentry of each field in $M^1$ by the index of the field. To do so
unzip the rightmost subentry and zip in the sequence $J_{r,f}$
instead. Set all fields with origin bit $0$ to be vacant producing a
sequence $M^s$ given by
 \begin{equation*}
 M^s = \angle{(s(C_{i_1}), 1, f(z_{i_1})), \underbrace{\bot, \ldots, \bot}_{\text{$0$ or more}} } \cdots \angle{s(C_{i_l}), 1, f(z_{i_l})), \underbrace{\bot, \ldots, \bot}_{\text{$0$ or more}} },
 \end{equation*}
 where $s(C)$ is the start index of chain $C$ and $\bot$ denotes a
 vacant field. We compact $M^s$ and unzip the origin bits to get a
 $2f$-packed sequence
\begin{equation*}
   S = \angle{(s(C_{i_1}), f(z_{i_1})), \ldots, (s(C_{i_l}), f(z_{i_l}))}.
\end{equation*}
The length of $C_{i_j}$, denoted $l(C_{i_j})$, is given by $l(C_{i_j})
= s(C_{i_{j+1}}) - s(C_{i_{j}})$, $1 \leq j < l$. Hence, we can
compute the lengths for all chains except the $C_{i_l}$ by subtracting
the rightmost subentries of $S$ from the rightmost subentries of $S$
shifted to the right by one field. We compute the length of $C_{i_l}$
as $|S| - s(C_{i_l}) + 1$ and store all lengths as the $f$-packed
sequence
\begin{equation*}
   L = \angle{(l(C_{i_1}), f(z_{i_1})), \ldots, (l(C_{i_l}), f(z_{i_l}))}.
\end{equation*}
As in step $2$, all operations in step $3$ takes $O(1)$ time except for the compact operation that takes $O(\log s)$ time by Lemma~\ref{lem:packedcompact}.  

\paragraph{Step 4: Copy Function Values} Expand each field
$(l(C_{i_j}), f(z_j))$ in $L$ to $l(C_{i_j})$ copies of $f(z_j)$. To
do so, we run a reverse version of the compact algorithm that copies
fields whenever fields are moved. We copy the fields in $\log s$
iterations. At iteration $h$, $h = \log s, \ldots, 0$ extract all
fields $j$ from $X$ such that bit $h$ of the right subentry of
$L\angle{j}$ is $1$. Replicate each of these fields to the $2^h$
fields to their left. Finally, we unzip the rightmost subentry to get
the final result.
% by multiplying with $1(0^{f-1}1)^{2^s -1}$. Finally, we unzip the rightmost subentry to get the final result. 
Each of the $O(\log s)$ iterations take $O(1)$ time and therefore step $4$ takes $O(\log s)$ time. 

\medskip

Each step of the algorithm for $\SortedMap(F,X)$ uses time $O(\log s)$. This leads to the following result for general $f$-packed sequences. 
\begin{lemma}\label{lem:packedsortedmap}
For a sorted $f$-packed sequence $X$ with $r$ entries and a $f$-packed
function $G$ with $u$ entries such that all entries in $X$ appear in
$\dom(G)$, we can compute $\SortedMap(G,X)$ in time
$O(\frac{(r+u)f}{w}\log w + 1)$. 
\end{lemma}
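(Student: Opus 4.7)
The plan is to lift the single-word, four-step $\SortedMap$ algorithm just described to general packed sequences by reusing the partitioning strategy from the proof of Lemma~\ref{lem:nstdother}(iii). Because the single-word algorithm now takes $O(\log s)$ rather than $O(1)$ time, the analysis differs from the non-standard-instruction case only in that the cost per processed word grows by a $\log s = O(\log w)$ factor.

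First I would split $G$ into its constituent words $G_1, \ldots, G_{\lceil u/s \rceil}$ and, exactly as in the proof of Lemma~\ref{lem:nstdother}(iii), scan $X$ from left to right to partition it into maximum-length subsequences $X_1, \ldots, X_{\lceil u/s \rceil}$ with all entries of $X_i$ lying in $\dom(G_i)$. This partitioning only needs, for each boundary, the largest entry $g_i$ of $G_i$ and the $\rank$ of $g_i$ in the current word of $X$; both are constant-time single-word operations (note that $\rank$ was shown to be a standard constant-time operation in Section~\ref{sec:bittricks} and does not require any non-standard instruction). The total work of this phase is $O((r+u)/s + 1)$ time.

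Next, for each $i$ I would iterate over the words of $X_i$, invoking the four-step single-word $\SortedMap$ routine on each word together with $G_i$. Each invocation is legitimate because $X_i$ is sorted (it is a contiguous subsequence of the sorted $X$) and every entry of $X_i$ lies in $\dom(G_i)$, which is precisely the precondition required by Step~1 of that routine. Each invocation costs $O(\log s)$ time by Steps~1--4 of the single-word algorithm (dominated by the $\Merge$ and two compact operations, each $O(\log s)$ by Lemma~\ref{lem:packedmerge} and Lemma~\ref{lem:packedcompact}). Finally I would concatenate the per-word outputs into the packed sequence $\Map(G,X)$.

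Summing up, the total number of words processed across all $X_i$ is $O((r+u)/s + 1)$, and each costs $O(\log s)$, giving total time $O(((r+u)/s + 1)\log s) = O(\frac{(r+u)f}{w}\log w)$ as claimed. I do not anticipate a genuine obstacle here; the only point that merits care is verifying that the partitioning preserves the sortedness and the domain-inclusion property that Step~1's $\Merge$ relies on (namely, that every entry of $X_i$ appears in $\dom(G_i)$), which follows because each boundary is placed precisely at the first index of $X$ whose value exceeds the largest entry of $G_i$.
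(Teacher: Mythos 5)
Your proposal is correct and follows essentially the same route as the paper: the paper's proof likewise reuses the partitioning algorithm from Lemma~\ref{lem:nstdother}(iii) and simply replaces each constant-time non-standard $\SortedMap$ instruction by the $O(\log s)$ four-step single-word routine, with the same $O((r+u)/s)$ count of invocations. The only negligible slip is attributing the single-word $\Merge$ cost to Lemma~\ref{lem:packedmerge} rather than to the $O(\log s)$ word-level merge of Section~\ref{sec:merging}, which does not affect the argument.
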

\begin{proof}
Each of the $O((r+u)/s) = O((r+u)f/w)$ $\SortedMap$ instructions used
in the algorithm in the proof of Lemma~\ref{lem:nstdother} take
$O(\log s) = O(\log w)$ time. In total, the algorithm takes time
$O(\frac{(r+u)f}{w}\log w + 1)$.
\end{proof}

Plugging the above results in the algorithm for $\Map$ from
Section~\ref{sec:lncamap}, we obtain the following result.
\begin{lemma}\label{lem:packedmap}
  For a $f$-packed sequence $X$ with $r$ entries and a $f$-packed
  function $G$ with $u$ entries such that all entries in $X$ appear in
  $\dom(G)$, we can compute $\Map(G,X)$ in time $O(\frac{rf}{w}\log r
  \log w + \frac{(r+u)f}{w}\log w + 1)$.
\end{lemma}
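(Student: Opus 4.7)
The plan is to execute the decomposition of $\Map$ spelled out at the end of Section~\ref{sec:lncamap}---namely $(Z_1, Z_2) := \Unzip(\Sort(\Zip(X, J_{r,f})))$, then $A := \SortedMap(G, Z_1)$, then $(M_1, M_2) := \Unzip(\Sort(\Zip(Z_2, A)))$, and to return $M_2$---and then to bound each individual step using the packed-sequence lemmas already established in this section. Correctness was argued immediately after that decomposition, so only the running-time accounting remains.

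First I would handle the two $\Zip$ and two $\Unzip$ invocations. Each acts on a packed sequence of length $r$ whose fields are either $f$ or $2f$ bits wide, and Lemma~\ref{lem:packedzip} charges each such call $O(\tfrac{rf}{w}\log w + 1)$. The constant sequence $J_{r,f}$ need not be stored ahead of time: as observed in the proof of Lemma~\ref{lem:nstdother}(i), any single word of it can be generated in $O(1)$ time. Doubling the field width from $f$ to $2f$ after the first zip is harmless because the total bit length stays $\Theta(rf)$ and the field width stays $O(\log r)$, so the hypotheses of the subsequent $\Sort$ and $\SortedMap$ lemmas continue to hold.

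Next I would plug in the two $\Sort$ calls and the single $\SortedMap$ call. Each $\Sort$ acts on a $2f$-packed sequence of length $r$, so by Lemma~\ref{lem:packedsort} it costs $O(\tfrac{rf}{w}\log r\log w + 1)$; this is the dominant summand in the target bound. The $\SortedMap$ call applies $G$ (length $u$) to the sorted $Z_1$ (length $r$), contributing $O(\tfrac{(r+u)f}{w}\log w)$ by Lemma~\ref{lem:packedsortedmap}. Summing over the seven operations yields
\[
O\!\left(\tfrac{rf}{w}\log r\log w + \tfrac{(r+u)f}{w}\log w + \tfrac{rf}{w}\log w + 1\right),
\]
which collapses to the stated $O(\tfrac{rf}{w}\log r\log w + \tfrac{(r+u)f}{w}\log w)$: the bare $\tfrac{rf}{w}\log w$ term is absorbed by $\tfrac{rf}{w}\log r\log w$ whenever $r\geq 2$, and the case $r=1$ is trivial so the additive $+1$ is also absorbed.

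The work here is not conceptual---every ingredient is already proved---so the main obstacle is purely bookkeeping: verifying that the field width stays $O(f)$ through the two zips, that the assumption $\dom(G)\supseteq X$ is preserved when we pass the sorted $Z_1$ into $\SortedMap$, and that the lengths of $Z_2$ and $A$ remain $r$ so that the second $\Zip$--$\Sort$--$\Unzip$ chain falls under the same bounds as the first.
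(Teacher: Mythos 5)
Your proof is correct and follows exactly the paper's own argument: apply the $\Zip$--$\Sort$--$\Unzip$--$\SortedMap$ decomposition of $\Map$ from Section~\ref{sec:lncamap} and sum the bounds of Lemmas~\ref{lem:packedzip}, \ref{lem:packedsort}, and \ref{lem:packedsortedmap}. The extra bookkeeping you supply (the $2f$-width fields after zipping, the on-the-fly generation of $J_{r,f}$, absorption of lower-order terms) is all sound and merely makes explicit what the paper leaves implicit.
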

\begin{proof}
The algorithm from Section~\ref{sec:lncamap} does a constant number of
$\Sort$, $\Zip$, and $\Unzip$ operations on packed sequences of length
$r$ and performs a single $\SortedMap$ operation on a packed function
of length $u$ and a sequence of length $r$. By
Lemmas~\ref{lem:packedzip}, \ref{lem:packedsort}, and
\ref{lem:packedsortedmap} this takes time $O(\frac{rf}{w}\log r \log w
+ \frac{(r+u)f}{w}\log w + 1)$.
\end{proof}

\subsection{Label Nearest Common Ancestor}
We present an $O(\log f)$ algorithm for the $\Lnca$ instruction. We first review the relevant features of the labeling scheme from Alstrup et al.~\cite{AGKR2004}.

\subsubsection{The Labeling Scheme}\label{sec:labelingscheme}
Let $T$ a tree with $t$ nodes. The labeling scheme from Alstrup et
al.~\cite{AGKR2004} assigns to each node $v$ in $T$ a unique bit
string, called the \emph{label} and denoted $\lab(v)$, of length $O(\log
t)$ bits. The label is the concatenation of three identical length bit
strings:
$$
\lab(v) = p(v) \cdot b(v) \cdot l(v)
$$
The label $p(v)$, called the \emph{part label}, is the concatenation
of an alternating sequence of variable length bit strings called
\emph{lights parts} and \emph{heavy parts}:
\begin{equation*}
p(v) = h_0 \cdot l_1 \cdot h_1 \cdots  l_j \cdot h_{j}
\end{equation*}
Each heavy and light part in the sequence identify special nodes on
the path from the root of $T$ to $v$. The leftmost part, $h_0$,
identifies the root. The total number of parts in $p(v)$ and the total
length of the parts is at most $O(\log t)$. For simplicity in our
algorithm, we use a version of the labeling scheme where the parts are
constructed using \emph{prefix free codes} (see remark 2 in Section 5
of Alstrup et al.~\cite{AGKR2004}). This implies that if part labels
$p(v)$ and $p(w)$ agree on the leftmost $i-1$ parts, then part $i$ in
$p(v)$ is not a prefix of part $i$ in $p(w)$ and vice versa. We also
prefix all parts in all part labels by a single $0$ bit. This
increases the minimum length of a part to $2$ and ensures the longest
common prefix of any two parts is at least $1$. Since the total number
of parts in a part label is $O(\log t)$, this increases the total
length of part labels by at most a factor $2$.

The sublabels $b(v)$ and $l(v)$ identify the boundaries of parts in
$p(v)$. The sublabel $b(v)$ has length $|p(v)|+ 1$ and is $1$ at each
leftmost position of a light or heavy part in $p(v)$ and $1$ at
position $|p(v)|+1$. The sublabel $l(v)$ has length $|p(v)|$ and is
$1$ at each leftmost position of a light part in $p(v)$. The total
length of $\lab(v)$ is $3|p(v)| + 1 = O(\log t)$.

For our purposes, we need to store labels from $T$ in equal length
fields in packed sequences. To do so compute the length $c$ of the
maximum length part label assigned to a node in $T$. Note that $c$ is
an upper bound on any sublabel in $T$. We store all labels in fields of
length $f = 3c$ bits of the form $(p(v) \cdot 0^{c - |p(v)|}, b(v)
\cdot 0^{c - |b(v)|}, l(v) \cdot 0^{c - |l(v)|})$, i.e., each sublabel
is stored in a subfield of length $c$ aligned to the left of the
subfield and padded with $0$'s to the right.

Alstrup et al.~\cite{AGKR2004} showed how to compute $\lnca$ of two
labels in $T$. We restate it here in an form suitable for our
purposes. First we need some definitions. For two bit strings $x$ and
$y$, we write $x <_{\lex} y$ if and only if $x$ precedes $y$ in the
\emph{lexicographic} order on binary strings, that is, $x$ is a prefix
of $y$ or the first bit in which $x$ and $y$ differ is $0$ in $x$ and
$1$ in $y$. To compute the lexicographic minimum of $x$ and $y$,
denoted $\min_{\lex}$, we can shift the smaller to left align $x$ and
$y$ and then compute the numerical minimum. Let $x = p(v)$ and $y =
p(w)$ be part labels of nodes $v$ and $w$. The \emph{longest common
  part prefix} of $x$ and $y$, denoted $\lcpp(x,y)$, is the longest
common prefix of $x$ and $y$ that ends at a part boundary. The
\emph{leftmost distinguishing part} of $x$ with respect to $y$,
denoted $\ldp_y(x)$, is the part in $x$ immediately to the right of
$\lcpp(x,y)$.

\begin{lemma}[Alstrup et al.~\cite{AGKR2004}(Lemma 5)]\label{lem:nca}
Let $x = p(v)$ and $y = p(w)$ be part labels of nodes $v$ and $w$. Then,  
\begin{equation*}
p(\nca(v,w)) = 
\begin{cases}
    \lcpp(x,y)  & \text{if $\ldp_y(x)$ is a heavy part }, \\
    \lcpp(x,y) \mid (\min_\lex(\ldp_y(x), \ldp_x(y)) \gg |\lcpp(x,y)|)  & \text{if $\ldp_y(x)$ is a light part}.
\end{cases}
\end{equation*}
\end{lemma}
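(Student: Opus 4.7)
The plan is to unpack the semantics of the Alstrup et al.\ labeling and verify each case directly. I would first record three structural facts needed throughout: (a) for every node $v$, the part label $p(v)$ encodes the root-to-$v$ walk as an alternating sequence of heavy-path descents and light-edge steps in the heavy-path decomposition of $T$; (b) whenever $u$ is an ancestor of $v$, $p(u)$ is a prefix of $p(v)$ that ends at a part boundary; and (c) conversely, any common prefix of $p(v)$ and $p(w)$ that ends at a part boundary is the part label of a common ancestor of both nodes. Facts (b) and (c) together imply that $\lcpp(x,y)$ is the part label of a well-defined common ancestor $u_0$ of $v$ and $w$, and that $\nca(v,w)$ is an ancestor of both that sits at or below $u_0$.

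With $u_0$ pinned down, I would decompose $x = \lcpp(x,y) \cdot \ldp_y(x) \cdot x'$ and $y = \lcpp(x,y) \cdot \ldp_x(y) \cdot y'$. The classification of the lemma corresponds to the type of the first diverging part, which determines whether $v$ and $w$ leave $u_0$ into disjoint subtrees or continue along a common heavy path and peel off at different branch points below $u_0$. In each case I would explicitly identify $\nca(v,w)$ as a node in $T$ and then read off its part label.

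In the heavy case I would invoke the prefix-free coding of parts to conclude that $\ldp_y(x)$ and $\ldp_x(y)$ share no nontrivial common prefix, so the paths diverge immediately at $u_0$ and $v,w$ descend into disjoint subtrees rooted there. Consequently $\nca(v,w)=u_0$ and $p(\nca(v,w))=\lcpp(x,y)$, matching the first clause. In the light case both $\ldp_y(x)$ and $\ldp_x(y)$ are light parts encoding positions along a common heavy-path segment descending from $u_0$, and $\nca(v,w)$ is the shallower of the two branch points on that segment. The lexicographic minimum of the two light parts picks out this shallower position in Alstrup et al.'s encoding, and appending it to $\lcpp(x,y)$ via a right shift by $|\lcpp(x,y)|$ followed by a bitwise OR produces precisely the expression stated.

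The main obstacle will be justifying that the lexicographic order on light parts agrees with the shallow-to-deep order of branch points along the shared heavy path; this reduces to inspecting how the prefix-free codes for light parts are constructed in~\cite{AGKR2004}, and it is the only step that truly depends on the specific choice of code rather than on generic properties of heavy-path decompositions. Once that alignment is in hand, the two case arguments above combine to give the claimed formula, with the remaining verifications being routine manipulation of prefixes and part boundaries.
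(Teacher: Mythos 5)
First, a framing point: the paper does not prove this lemma at all --- it is imported verbatim as Lemma~5 of Alstrup et al.~\cite{AGKR2004}, merely restated in the notation used here. So there is no in-paper proof to compare against; you are reconstructing an external result, and your sketch has to stand on its own.

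On its own terms, the sketch has a load-bearing error in fact (b): it is \emph{not} true in this labeling scheme that an ancestor's part label is a prefix of a descendant's part label ending at a part boundary. Take $u$ a proper ancestor of $v$ lying on the same heavy path that $v$'s root-to-$v$ route uses: then $p(u)$ and $p(v)$ agree on all parts up to that heavy path, but the next part of $p(u)$ encodes $u$'s own position/exit on the path while the next part of $p(v)$ encodes $v$'s, and the prefix-free coding of parts (which the paper explicitly invokes) guarantees that neither of these parts is a prefix of the other. Hence $p(u)$ is not a prefix of $p(v)$, and (c) fails along with (b). You can see that (b) cannot be right from the statement of the lemma itself: if every ancestor's label were a boundary-aligned prefix of its descendants' labels, then applying this to $u=\nca(v,w)$ would give $p(\nca(v,w))$ as a common boundary-aligned prefix of $x$ and $y$, hence contained in $\lcpp(x,y)$, and the first clause would hold unconditionally --- the second clause, in which $p(\nca(v,w))$ strictly extends $\lcpp(x,y)$ by a part that is a prefix of neither $\ldp_y(x)$ nor $\ldp_x(y)$, would be vacuous. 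Your own (correct) description of the light case, where the nca is a branch point strictly below the node identified by $\lcpp(x,y)$, contradicts your scaffolding.

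The remaining content of the argument --- that in one case the two routes diverge into disjoint light subtrees so the nca is the $\lcpp$ node, and in the other they share a heavy-path segment and the nca is the shallower of two exit points, identified by the lexicographically smaller of the two distinguishing parts --- is the right picture, but the one step you explicitly defer (that lexicographic order on these parts agrees with depth order along the shared path) is precisely the nontrivial property of the code that the lemma encapsulates. Deferring it to an inspection of~\cite{AGKR2004} means the proposal reduces the lemma to the source rather than proving it, which is in effect what the paper itself does; but combined with the false structural claims (b)--(c), the write-up as given would not survive as a proof. To repair it, replace (b)--(c) with the correct invariant of the scheme (roughly: the boundary-aligned common prefix identifies the deepest heavy path shared by both routes, and ancestry within that path is resolved by the final parts), and carry out the order-agreement argument for the specific prefix-free code.
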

From the information in the label and Lemma~\ref{lem:nca} it is straightforward to compute $\lnca(x,y)$ for any two labels $x,y$ stored in $O(1)$ words in $O(1)$ time using straightforward bit manipluations. We present an elementwise version for packed sequences in the following section. 

\subsubsection{Computing Label Nearest Common Ancestor}
Let $X$ and $Y$ be $f$-packed sequences of length $s$. We present an
$O(\log f)$ algorithm for the $\Lnca(X,Y)$ instruction. We first need
some additional useful operations. Let $x \neq 0$ be a bit
string. Define $\lmb(x)$ and $\rmb(x)$ to be the position of the
leftmost and rightmost $1$ bit of $x$, respectively. Define
\begin{align*}
    \lsmear(x) &= 0^{|x|-\rmb(x)}\cdot 1^{\rmb(x)}  \\
    \rsmear(x) &= 1^{\lmb(x)} \cdot 0^{|x|-\lmb(x)}
\end{align*}
Thus, $\lsmear(x)$ ``smears'' the rightmost $1$ bit to the right and
clears all bits to left. Symmetrically, $\rsmear(x)$ smears the
leftmost $1$ bit to the left and clears all bits to left. We can
compute $\lsmear(x)$ in $O(1)$ time since $\lsmear(x) = x \oplus
(x-1)$ (see e.g. Knuth~\cite{Knuth2008}). Since $\rsmear(x) =
(\lsmear(x^R))^R$ and a reverse takes time $O(\log |x|)$ (as described
in Section~\ref{sec:merging}) we can compute $\rsmear(x)$ in time
$O(\log |x|)$. Elementwise versions of $\lsmear$ and $\rsmear$ on
$f$-packed sequences are easy to obtain. Given a $f$-packed sequence
$X$ of length $s$, we can compute the elementwise $\lsmear$ as $X
\oplus (X-1_{s,f})$. We can reverse all fields in time $O(\log f)$ and
hence we can compute the elementwise $\rsmear$ in time $O(\log f)$.

We compute $\Lnca(X,Y)$ as follows. We handle identical pairs of
labels first, that is, we extract all fields $i$ from $X\angle{i}$
such that $X\angle{i} = Y\angle{i}$ into a sequence $L'$. Since
$\lnca(x,x) = x$ for any $x$, we have that $\Lnca(X,Y)\angle{i} =
X\angle{i}$ for these fields. We handle the remaining fields using the
$3$ step algorithm below. We then $\mid$ the result with $L'$ to get
the final sequence.

\paragraph{Step 1: Compute Masks}
Unzip the $f/3$-packed sequences $X_p$, $X_b$, $X_l$, $Y_p$, $Y_b$,
and $Y_l$ from $X$ and $Y$ corresponding to each of the $3$
sublabels. We compute $f/3$-packed sequences of masks $Z$, $M$, $M_X$,
and $M_Y$ to extract relevant parts from $X_p$ and $Y_p$. The mask are
given by
\begin{align*}
Z\angle{i} &:= \lsmear(X_p\angle{i} \oplus Y_p\angle{i})\\
U\angle{i} &:= \rsmear(X_b\angle{i} \:\&\: Z\angle{i}) \\
R_Y\angle{i} &:= \lsmear((U\angle{i} \gg 1) \:\&\: X_b\angle{i}) \ll 1\\
R_X\angle{i} &:= \lsmear((U\angle{i} \gg 1) \:\&\: Y_b\angle{i}) \ll 1
\end{align*}
We explain the contents of the masks in the
following. Figure~\ref{fig:lnca} illustrates the computations.
\begin{figure}[t]
  \centering \includegraphics[scale=.5]{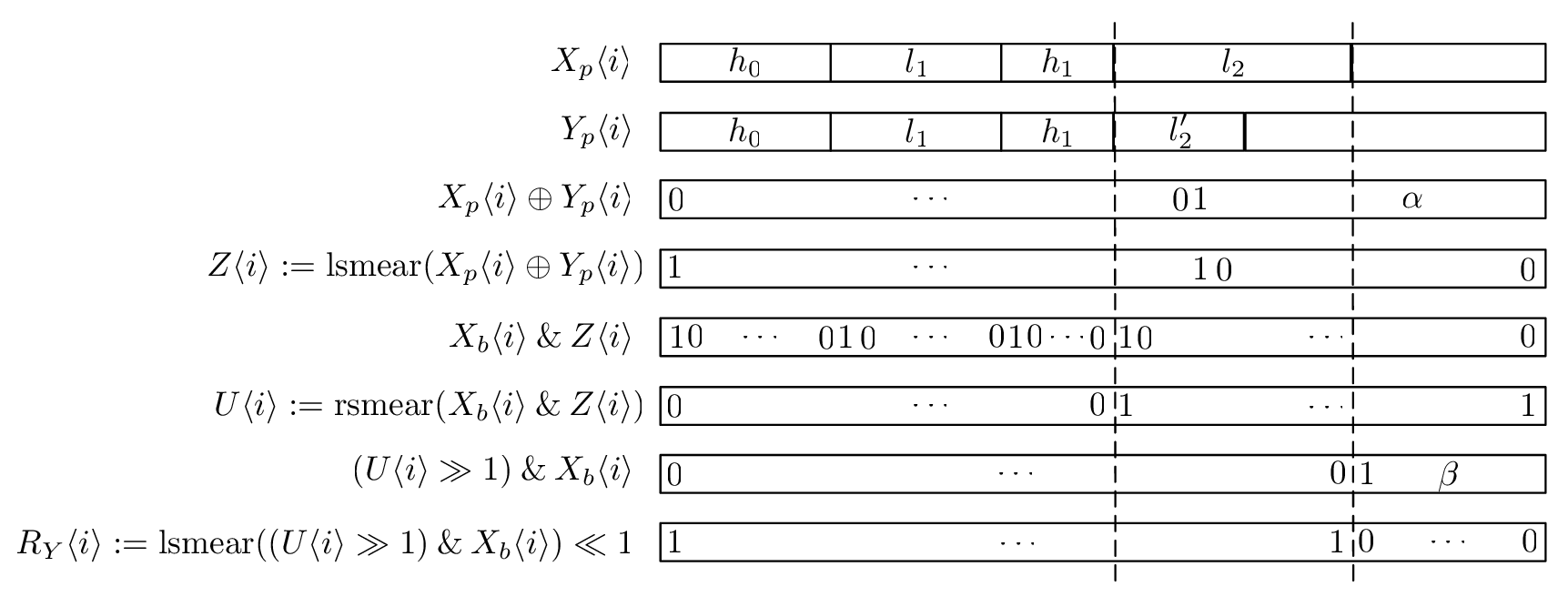}
  \caption{Computing $\lnca(X_p\angle{i}, Y_p\angle{i})$. The solid
    lines in $X_p\angle{i}$ and $Y_p\angle{i}$ show part boundaries
    and the dashed lines show boundaries for $\lcpp(X_p\angle{i},
    Y_p\angle{i})$ and $\ldp_{Y_p\angle{i}}(X_p\angle{i})$. $\alpha$
    and $\beta$ are arbitrary bit strings.}
   \label{fig:lnca}
\end{figure}
The mask $Z\angle{i}$ consists of $1$'s in position $z =
\rmb(X_p\angle{i} \oplus Y_p\angle{i})$ and all positions to the left
of $z$. Since $X_p$ and $Y_p$ are distinct labels, $z$ is the rightmost
position where $X_p\angle{i}$ and $Y_p\angle{i}$ differ. Since the
parts are prefix free encoded and prefixed with $0$, we have that $z$
is a position within $\ldp_{Y_p\angle{i}}(X_p\angle{i})$ and
$\ldp_{X_p\angle{i}}(Y_p\angle{i})$ and it is not the leftmost
position.  Consequently, $u = \lmb(X_b\angle{i} \:\&\: Z\angle{i})$ is
the leftmost position of $\ldp_{Y_p\angle{i}}(X_p\angle{i})$ and
$\ldp_{X_p\angle{i}}(Y_p\angle{i})$, and therefore the leftmost
position to the right of $\lcpp(X_p\angle{i}, Y_p\angle{i})$. Hence,
$U\angle{i} = \rsmear(X_b\angle{i} \:\&\: Z\angle{i})$ consists of
$1$'s in all positions to the right of $\lcpp(X_p\angle{i},
Y_p\angle{i})$.  This implies that $\lmb(U\angle{i} \gg 1) \:\&\:
X_b\angle{i})$ is the position immediately to the right of
$\ldp_{Y_p\angle{i}}(X_p\angle{i})$ (if
$\ldp_{Y_p\angle{i}}(X_p\angle{i})$ is the rightmost part this still
holds due to the extra bit in $X_b$ at the rightmost
position). Therefore $R_Y\angle{i} := \lsmear((U\angle{i} \gg 1)
\:\&\: X_b\angle{i}) \ll 1$ consists of $1$'s in all positions of
$\lcpp(X_p\angle{i}, Y_p\angle{i})$ and
$\ldp_{Y_p\angle{i}}(X_p\angle{i})$. Symmetrically, $R_X\angle{i} :=
\lsmear((U\angle{i} \gg 1) \:\&\: Y_b\angle{i}) \ll 1$ consists of
$1$'s in all positions of $\lcpp(X_p\angle{i}, Y_p\angle{i})$ and
$\ldp_{X_p\angle{i}}(Y_p\angle{i})$.

All operations except the elementwise $\rsmear$ in the computation of
$U$ are straightforward to compute in $O(1)$ time. Hence, the time for
this step is $O(\log f)$.

\paragraph{Step 2: Extract Relevants Parts}
Compute the $f/3$-packed sequences $\LCPP$, $\LDP_Y$, $\LDP_X$, and $M$ given by
\begin{align*}
\LCPP\angle{i} &:= X_p\angle{i} \:\&\: \overline{U\angle{i}} \\
\LDP_Y\angle{i} &:= X_p\angle{i} \:\&\: R_Y\angle{i} \:\&\: U\angle{i} \\
\LDP_X\angle{i} &:= Y_p\angle{i} \:\&\: R_X\angle{i} \:\&\: U\angle{i}\\
M\angle{i} &:= \min(\LDP_Y\angle{i}, \LDP_X\angle{i})
\end{align*}
From the definition of the mask in step $1$, we have that
$\LCPP\angle{i} = \lcpp(X_p\angle{i}, Y_p\angle{i})$. The sequence
$\LDP_Y\angle{i}$ is $X\angle{i}$ where all but
$\ldp_{Y_p\angle{i}}(X_p\angle{i})$ is zeroed and therefore
$\LDP_Y\angle{i} = \ldp_{Y_p\angle{i}}(X_p\angle{i}) \gg
|\lcpp(X_p\angle{i}, Y_p\angle{i})|$ (see
Figure~\ref{fig:lnca}). Similarly, $\LDP_X\angle{i} =
\ldp_{X_p\angle{i}}(Y_p\angle{i}) \gg |\lcpp(X_p\angle{i},
Y_p\angle{i})|$. The parts $\ldp_{Y_p\angle{i}}(X_p\angle{i})$ and
$\ldp_{X_p\angle{i}}(Y_p\angle{i})$ are left aligned in
$\LDP_Y\angle{i}$ and $\LDP_X\angle{i}$ and all other positions are
$0$. Hence,
\begin{equation*}
M\angle{i} = \min(\LDP_Y\angle{i}, \LDP_X\angle{i}) = \min_{\lex}(\LDP_Y\angle{i}, \LDP_X\angle{i}) \gg |\lcpp(X_p\angle{i}, Y_p\angle{i})| .
\end{equation*}
The time for this step is $O(1)$. 
\paragraph{Step 3: Construct Labels}
The part labels are computed as the $f/3$-packed sequence $P$ given by 
\begin{equation*}
P\angle{i} = 
\begin{cases}
    \LCPP\angle{i}  & \text{if $\lsmear(X_b\angle{i} \:\&\: U\angle{i}) =  \lsmear(X_l\angle{i} \:\&\: U\angle{i})$ }, \\
    \LCPP\angle{i} \mid M\angle{i}     & \text{otherwise}.
\end{cases}
\end{equation*}
Recall that $U\angle{i}$ consists of $1$'s at all position in of $1$'s
in all positions to the right of $\lcpp(X_p\angle{i},
Y_p\angle{i})$. Hence, if $\lsmear(X_b\angle{i} \:\&\: U\angle{i}) =
\lsmear(X_l\angle{i} \:\&\: U\angle{i})$, then
$\ldp_{Y_p\angle{i}}(X_p\angle{i})$ is a light part. By
Lemma~\ref{lem:nca} it follows that $P\angle{i}$ is the part label for
$\lnca(X\angle{i}, Y\angle{i})$. To compute $P$, we compare the
sequences $\lsmear(X_b\angle{i} \:\&\: U\angle{i})$ and
$\lsmear(X_l\angle{i} \:\&\: U\angle{i})$, extract fields accordingly
from $M$, and $\mid$ this with $\LCPP$. The remaining sublabels are
constructed by extracting from $X_b$ and $X_l$ using $Z$. We construct
the final $f$-packed sequence $\Lnca(X,Y)$ by zipping the sublabels
together.

The time for this step is $O(1)$.  

\medskip 
The total time for the algorithm is $O(\log f)$. For general packed
sequences, we have the following result.
\begin{lemma}\label{lem:wordlnca}
  For $f$-packed sequences $X$ and $Y$ of length $r$, we can compute
  $\Lnca(X,Y)$ in time $O(\frac{rf}{w}\log f + 1)$.
\end{lemma}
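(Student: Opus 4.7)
The plan is to apply the single-word $O(\log f)$ algorithm from the preceding subsection to each pair of corresponding words of $X$ and $Y$ independently, then concatenate the outputs, exactly mirroring the reduction used in Lemma~\ref{lem:nstdlnca} but paying $O(\log f)$ per word instead of $O(1)$.

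First I would observe that $\Lnca$ acts elementwise and each label occupies precisely one field, so the $k$th output word depends only on the $k$th words of $X$ and $Y$. With $s = \lfloor w/(f+1)\rfloor$ fields per word, $X$ and $Y$ each occupy $O(r/s + 1) = O(rf/w + 1)$ words, and the output has the same number of words.

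For each pair of input words I would invoke the three-step algorithm developed above. Step~1 unzips the three sublabels $X_p, X_b, X_l$ and $Y_p, Y_b, Y_l$ and computes the masks $Z$, $U$, $R_X$, $R_Y$; its cost is $O(\log f)$ because the elementwise $\rsmear$ used to build $U$ reverses all fields in time $O(\log f)$, applies $\lsmear$ in $O(1)$, and reverses back. Steps~2 and~3 produce $\LCPP$, $\LDP_X$, $\LDP_Y$, $M$ and assemble the three output sublabels using only constant-time elementwise arithmetic, bitwise, and compare operations. Hence the per-word cost is $O(\log f)$, and summing over the $O(rf/w + 1)$ words gives the claimed bound $O(\tfrac{rf}{w}\log f + 1)$.

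I do not expect any real obstacle: the only thing to verify is that the per-word invocations do not interact. This is immediate because every bit-level primitive used in the single-word algorithm ($\rsmear$, $\lsmear$, $\oplus$, $\:\&\:$, and the field reverses) is already defined to respect field boundaries on $f$-packed sequences, and the test-bit padding separating consecutive fields prevents the $\rmb$-style smears and subtractions from propagating across boundaries. Thus composing the single-word algorithm across the words of the input is correct and yields the stated time bound.
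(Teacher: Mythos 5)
Your proposal is correct and follows exactly the paper's own argument: the paper likewise proves this lemma by running the word-by-word reduction from Lemma~\ref{lem:nstdlnca} with the $O(\log f)$ single-word $\Lnca$ implementation substituted for the non-standard instruction, yielding $O(r/s\cdot\log f + 1) = O(\frac{rf}{w}\log f + 1)$. Your additional remarks on field-boundary isolation are a harmless elaboration of what the paper leaves implicit.
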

\begin{proof}
  Apply the algorithm from the proof of Lemma~\ref{lem:nstdlnca} using
  the $O(\log f)$ implementation of $\Lnca$ instruction. The time is
  $O(r \log f/s + 1) = O(\frac{rf}{w}\log f + 1)$.
\end{proof}

\subsection{The Algorithm}
We combine the implementation of $\Map$ and $\Lnca$ with
Lemma~\ref{lem:redux} to obtain the following result.
\begin{theorem}\label{thm:word}
  Approximate string matching for strings $P$ and $Q$ of lengths $m$
  and $n$, respectively, with error threshold $k$ can be solved in
  time $O(nk\cdot \frac{\log^{2} m \log w}{w} + n)$ and space $O(m)$.
\end{theorem}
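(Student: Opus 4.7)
The plan is simply to combine Lemmas~\ref{lem:packedmap} and~\ref{lem:wordlnca} with the parameterized reduction of Lemma~\ref{lem:redux}, since the hard work of implementing $\Map$ and $\Lnca$ using only standard word-RAM instructions has already been carried out. First I would apply the preprocessing from Section~\ref{sec:reduction}: build the generalized suffix tree $T_{P,\widehat{Q}}$ on substrings of length $2m-2k$, equip it with the labeling scheme of Theorem~\ref{thm:nca} modified as in Section~\ref{sec:labelingscheme} so that all labels fit in fields of length $f = O(\log m)$, and build the $f$-packed functions $N_P$, $N_{\widehat{Q}}$, and $D$ of length $u = O(m)$. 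This preprocessing uses $p = O(m)$ time and $s = O(m)$ space, and no lookup tables are needed.

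Next I would instantiate the complexity of one $\Map$--$\Lnca$ step. The packed sequences manipulated in Steps~1--4 of Section~\ref{sec:reduction} have length $r = O(m)$ and field width $f = O(\log m)$. By Lemma~\ref{lem:wordlnca}, a single $\Lnca$ call costs $O(\tfrac{rf}{w}\log f + 1)$, while by Lemma~\ref{lem:packedmap} a single $\Map$ call costs
\begin{equation*}
O\!\left(\tfrac{rf}{w}\log r\log w + \tfrac{(r+u)f}{w}\log w\right) = O\!\left(\tfrac{m\log^2 m\log w}{w} + 1\right),
\end{equation*}
and this term dominates the $\Lnca$ cost. Thus the cost $q$ of implementing one transition of the LV-algorithm in packed form is $q = O(\tfrac{m\log^2 m\log w}{w} + 1)$.

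Finally, I would plug $r,u=O(m)$, $f=O(\log m)$, $s=p=O(m)$, and the above $q$ into Lemma~\ref{lem:redux} to obtain time
\begin{equation*}
O\!\left(\tfrac{nk}{m}\cdot q + \tfrac{nk\log m}{w} + p + n\right) = O\!\left(nk\cdot\tfrac{\log^2 m\log w}{w} + n\right)
\end{equation*}
and space $O(s+m)=O(m)$, as desired. There is no real obstacle left at this point: all the delicate word-level parallel constructions have been isolated inside the non-standard instruction lemmas, and Lemma~\ref{lem:redux} already absorbs the additional $O(\tfrac{nk\log m}{w})$ term from the elementwise arithmetic and shifting in Steps~1, 2, and 4. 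The only mild point worth checking is that the $\Map$ cost really dominates the $\Lnca$ cost and the $\tfrac{nk\log m}{w}$ overhead from Lemma~\ref{lem:redux}, both of which follow immediately from $\log^2 m\log w \geq \log m$ and $\log^2 m \log w \geq \log m \log \log m$.
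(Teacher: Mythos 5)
Your proof is correct and takes essentially the same route as the paper: it plugs Lemmas~\ref{lem:wordlnca} and \ref{lem:packedmap} with $r,u=O(m)$ and $f=O(\log m)$ into the reduction of Lemma~\ref{lem:redux}, obtaining $q = O(\frac{m\log^2 m\log w}{w}+1)$ and hence the stated time and space bounds. The additional checks you mention (that the \Map{} cost dominates the \Lnca{} cost and the $\frac{nk\log m}{w}$ overhead) are exactly the implicit steps in the paper's own calculation.
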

\begin{proof}
  We plug in the results for $\Map$ and $\Lnca$ from
  Lemmas~\ref{lem:wordlnca} and \ref{lem:packedmap} into the reduction
  from Lemma~\ref{lem:redux}. We have $r,u = O(m)$ and $f = O(\log m)$
  and therefore $s = p = O(m)$ and $q = O(\frac{rf}{w}\log r \log w +
  \frac{(r+u)f}{w}\log w + \frac{rf}{w}\log f + 1) = O(\frac{m\log^{2}
    m \log w}{w} + 1)$. Thus, we obtain an algorithm for approximate
  string matching using space $O(m)$ and time $O(\frac{nk}{m} \cdot
  \frac{m\log^{2} m \log w}{w} + n) = O(nk\cdot \frac{\log^{2} m \log
    w}{w} + n)$.
\end{proof}
Combining Theorems~\ref{thm:tabulation} and \ref{thm:word} we have
shown Theorem~\ref{thm:main}.

\section{Acknowledgments}
We would like to thank the anonymous reviewers for many valuable
comments that greatly improved the quality of the paper.

\bibliographystyle{abbrv}
\bibliography{paper}

\begin{thebibliography}{10}

\bibitem{AH1997}
S.~Albers and T.~Hagerup.
\newblock Improved parallel integer sorting without concurrent writing.
\newblock {\em Inform. and Comput.}, 136:25--51, 1997.

\bibitem{AGKR2004}
S.~Alstrup, C.~Gavoille, H.~Kaplan, and T.~Rauhe.
\newblock Nearest common ancestors: A survey and a new algorithm for a
  distributed environment.
\newblock {\em Theory Comput. Syst.}, 37:441--456, 2004.

\bibitem{AHNR1998}
A.~Andersson, T.~Hagerup, S.~Nilsson, and R.~Raman.
\newblock Sorting in linear time?
\newblock {\em J. Comput. System Sci.}, 57(1):74--93, 1998.

\bibitem{ADKF1970}
V.~L. Arlazarov, E.~A. Dinic, M.~A. Kronrod, and I.~A. Faradzev.
\newblock On economic construction of the transitive closure of a directed
  graph (in russian). english translation in soviet math. dokl. 11, 1209-1210,
  1975.
\newblock {\em Dokl. Acad. Nauk.}, 194:487--488, 1970.

\bibitem{BYG1992}
R.~Baeza-Yates and G.~H. Gonnet.
\newblock A new approach to text searching.
\newblock {\em Commun. ACM}, 35(10):74--82, 1992.

\bibitem{BYN1996}
R.~A. Baeza-Yates and G.~Navarro.
\newblock A faster algorithm for approximate string matching.
\newblock In {\em Proceedings of the 7th Annual Symposium on Combinatorial
  Pattern Matching, Lecture Notes in Computer Science}, volume 1075, pages
  1--23, 1996.

\bibitem{Batcher1968}
K.~E. Batcher.
\newblock Sorting networks and their applications.
\newblock In {\em Proceedings of the AFIPS Spring Joint Computer Conference},
  pages 307--314, 1968.

\bibitem{BFC2000}
M.~A. Bender and M.~Farach-Colton.
\newblock The {LCA} problem revisited.
\newblock In {\em Proceedings of the 4th Latin American Symposium on
  Theoretical Informatics}, pages 88--94, 2000.

\bibitem{BFC2008}
P.~Bille and M.~Farach-Colton.
\newblock Fast and compact regular expression matching.
\newblock {\em Theoret. Comput. Sci.}, 409:486 -- 496, 2008.

\bibitem{CH2002}
R.~Cole and R.~Hariharan.
\newblock Approximate string matching: A simpler faster algorithm.
\newblock {\em SIAM J. Comput.}, 31(6):1761--1782, 2002.

\bibitem{CLRS2001}
T.~H. Cormen, C.~E. Leiserson, R.~L. Rivest, and C.~Stein.
\newblock {\em Introduction to Algorithms, second edition}.
\newblock MIT Press, 2001.

\bibitem{CFM2000}
M.~Farach-Colton, P.~Ferragina, and S.~Muthukrishnan.
\newblock On the sorting-complexity of suffix tree construction.
\newblock {\em J. ACM}, 47(6):987--1011, 2000.

\bibitem{GG88}
Z.~Galil and R.~Giancarlo.
\newblock Data structures and algorithms for approximate string matching.
\newblock {\em J. Complexity}, 4(1):33--72, 1988.

\bibitem{GP90}
Z.~Galil and K.~Park.
\newblock An improved algorithm for approximate string matching.
\newblock {\em SIAM J. Comput.}, 19(6):989--999, 1990.

\bibitem{Gusfield1997}
D.~Gusfield.
\newblock {\em Algorithms on strings, trees, and sequences: computer science
  and computational biology}.
\newblock Cambridge, 1997.

\bibitem{Hagerup1998}
T.~Hagerup.
\newblock Sorting and searching on the word {RAM}.
\newblock In {\em Proceedings of the 15th Annual Symposium on Theoretical
  Aspects of Computer Science, Lecture Notes in Computer Science}, volume 1373,
  pages 366--398, 1998.

\bibitem{HT1984}
D.~Harel and R.~E. Tarjan.
\newblock Fast algorithms for finding nearest common ancestors.
\newblock {\em SIAM J. Comput.}, 13(2):338--355, 1984.

\bibitem{HN2005}
H.~Hyyr{\"o} and G.~Navarro.
\newblock Bit-parallel witnesses and their applications to approximate string
  matching.
\newblock {\em Algorithmica}, 41(3):203--231, 2005.

\bibitem{Knuth2008}
D.~E. Knuth.
\newblock {\em The Art of Computer Programming, Volume 4, Pre-Fascicle 1a:
  Bitwise Tricks and Techniques (Art of Computer Programming)}.
\newblock 2008.

\bibitem{LV1989}
G.~M. Landau and U.~Vishkin.
\newblock Fast parallel and serial approximate string matching.
\newblock {\em J. Algorithms}, 10:157--169, 1989.

\bibitem{Leighton1992}
F.~T. Leighton.
\newblock {\em Introduction to Parallel Algorithms and Architectures: Arrays,
  Trees, Hypercubes}.
\newblock Morgan Kaufmann Publishers, 1992.

\bibitem{MP1980}
W.~Masek and M.~Paterson.
\newblock A faster algorithm for computing string edit distances.
\newblock {\em J. Comput. System Sci.}, 20:18--31, 1980.

\bibitem{Myers1986}
E.~W. Myers.
\newblock An {$O(ND)$} difference algorithm and its variations.
\newblock {\em Algorithmica}, 1(2):251--266, 1986.

\bibitem{Myers1999}
G.~Myers.
\newblock A fast bit-vector algorithm for approximate string matching based on
  dynamic programming.
\newblock {\em J. ACM}, 46(3):395--415, 1999.

\bibitem{Navarro2001a}
G.~Navarro.
\newblock A guided tour to approximate string matching.
\newblock {\em ACM Comput. Surv.}, 33(1):31--88, 2001.

\bibitem{SV96}
S.~C. Sahinalp and U.~Vishkin.
\newblock Efficient approximate and dynamic matching of patterns using a
  labeling paradigm.
\newblock In {\em Proceedings of the 37th Annual Symposium on Foundations of
  Computer Science}, pages 320--328, Washington, DC, USA, 1996. IEEE Computer
  Society.

\bibitem{Sellers1980}
P.~Sellers.
\newblock The theory and computation of evolutionary distances: Pattern
  recognition.
\newblock {\em J. Algorithms}, 1:359--373, 1980.

\bibitem{Ukkonen1985b}
E.~Ukkonen.
\newblock Algorithms for approximate string matching.
\newblock {\em Inf. Control}, 64(1-3):100--118, 1985.

\bibitem{UW1993}
E.~Ukkonen and D.~Wood.
\newblock Approximate string matching with suffix automata.
\newblock {\em Algorithmica}, 10(5):353--364, 1993.

\bibitem{WF1974}
R.~A. Wagner and M.~J. Fischer.
\newblock The string-to-string correction problem.
\newblock {\em J. ACM}, 21:168--173, 1974.

\bibitem{Wright1994}
A.~H. Wright.
\newblock Approximate string matching using within-word parallelism.
\newblock {\em Softw. Pract. Exper.}, 24(4):337--362, 1994.

\bibitem{WM1992b}
S.~Wu and U.~Manber.
\newblock Fast text searching: allowing errors.
\newblock {\em Commun. ACM}, 35(10):83--91, 1992.

\end{thebibliography}

\end{document}